\definecolor{Xiang}{rgb}{1,0,0}
\definecolor{Niranjan}{rgb}{0,0,1}
\definecolor{New}{rgb}{0,0.5,0}
\newtheorem{definition}{Definition}
\newtheorem{example}{Example}
\newcommand{\nop}[1]{}
\newtheorem{lemma}[]{Lemma}
\newtheorem{theorem}[]{Theorem}
\begin{document}
%
\title{Top-$k$ Community Similarity Search Over Large-Scale Road Networks\\ (Technical Report)}
%
%
%
%

\author{Niranjan Rai and 
        Xiang Lian 
\IEEEcompsocitemizethanks{\IEEEcompsocthanksitem Niranjan Rai and Xiang Lian are with the Department
of Computer Science, Kent State University.\protect\\
E-mail: \{nrai, xlian\}@kent.edu}
\thanks{Manuscript received May, 2021.}}

%
%

\markboth{IEEE Transactions on Knowledge and Data Engineering}%
{Rai and Lian: Top-$k$ Community Similarity Search Over Large-Scale Road Networks (Technical Report)}
%



\IEEEtitleabstractindextext{%
\begin{abstract}
With the urbanization and development of infrastructure, the community search over 
road networks has become increasingly important in many real applications such as 
urban/city planning, social study on local communities, and community recommendations by 
real estate agencies. In this paper, we propose a novel problem, namely \textit{top-$k$ 
community similarity search} ($Top\text{-}kCS^2$) over road networks, which efficiently and effectively 
obtains $k$ spatial communities that are the most similar to a given query community 
in road-network graphs. In order to efficiently and effectively tackle the $Top\text{-}kCS^2$ problem,
in this paper, we will design an effective similarity measure between spatial communities, and propose a framework for retrieving $Top\text{-}kCS^2$ query answers, which integrates offline pre-processing and online computation phases. Moreover, we also consider a variant, namely \textit{continuous top-$k$ community similarity search} ($CTop\text{-}kCS^2$), where the query community continuously moves along a query line segment. We develop an efficient algorithm to split query line segment into intervals, incrementally obtain similar candidate communities for each interval, and refine actual $CTop\text{-}kCS^2$ query answers. Extensive experiments have been conducted on real and synthetic data sets to confirm the efficiency and effectiveness of our proposed $Top\text{-}kCS^2$ and
$CTop\text{-}kCS^2$ approaches under various parameter settings. 
\end{abstract}

\begin{IEEEkeywords}
top-$k$ community similarity search, road-network graph
\end{IEEEkeywords}}

\maketitle

\IEEEdisplaynontitleabstractindextext

%
\IEEEpeerreviewmaketitle

\IEEEraisesectionheading{\section{Introduction}\label{sec:introduction}}

%
%
%
%


Recently, the community search/detection over graphs has received much attention in many real-world applications such as social network analysis \cite{fortunato2010community, newman2004finding, xu2012model, liu2009topic,nallapati2008joint, zhou2009graph, sozio2010community, cui2014local, cui2013online, li2015influential,huang2015approximate}, online marketing and advertising over geo-social networks \cite{yang2012socio, Li12, Yuan16, fang2017effective, GPSSN, chen2018maximum}, and many others. While prior works on the community search/detection \cite{Fortunato10, Newman04, sozio2010community, cui2014local, cui2013online, LiQYM15, huang2015approximate, FangCLH16} usually considered \textit{user communities} with strong social/spatial relationships in (geo-)social networks, in this paper, we will study a novel problem of retrieving top-$k$ spatial communities on road-network graphs, which are quite useful and important for urban/city planning or community recommendations by real estate agencies.

We have the following motivation examples.

\begin{example}
{\bf (Data Visualization via Lenses on Road Networks)} In real applications such as urban/city planning, social study, or transportation systems, data analysts often utilize geospatial visualization tools such as interactive lens \cite{article} and identify/analyze those communities with neighborhood similar to a target (query) community. Figure \ref{fig:roadnetwork} illustrates a map of road networks in a visualization system, on which a lens (i.e., a circle with radius $r$) is specified by a user (e.g., geologist or data analyst). In this case, the road-network subgraph within the lens can be considered as a query community, and the user may be interested in finding/analyzing other communities on road networks 
in a city (or a county) that are similar to the query community within the lens (with similar road-network structures and points of interest). \qquad $\blacksquare$
\end{example}

\begin{figure}[t!]
    \centering
    \includegraphics[width=170pt]{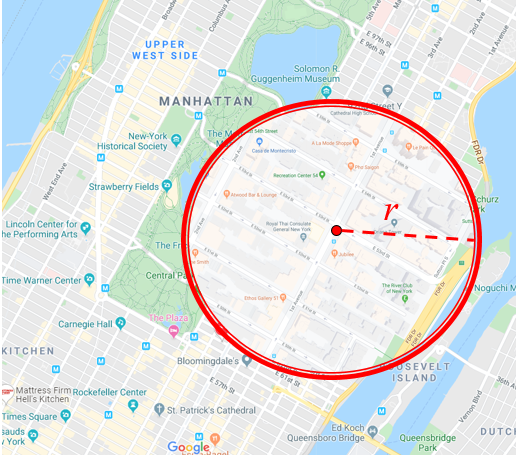}
    \caption{\small An example of lens on a road-network graph $G$.}
    \label{fig:lens}
\end{figure}

\begin{example}
{\bf (Neighborhood Recommendations to a Moving Family)} Due to various reasons such as job changes, finding new school districts, and so on, people may often move from one place to another. Assume that a family has to move to a new neighborhood due to a job change and prefer to live in a new neighborhood which is similar to their old one and stays spatially close to their new work location. For example, similar to the old neighborhood, the new neighborhood should have safe roads, easy commute routes, and/or desirable \textit{points of interest} (POIs) such as movie theatres or restaurants. Note that, many studies \cite{roundabout} have shown that the structure of road networks plays a vital role in the number of accidents. For example, the number of accidents significantly decreases, when road intersection points are replaced by round-abouts \cite{roundabout}. Moreover, family members (e.g., children) are usually interested in some POIs (e.g., theatres or parks).

Therefore, in this scenario, a realtor may want to find and recommend $k$ spatial communities (subgraphs) over road-network graphs that are most similar to a query community and spatially close to the family's new location (e.g., new working place or school district), in terms of graph structure, POI similarity, and spatial closeness. The returned spatial communities are useful for recommending candidate communities to the family to move in. $\blacksquare$

\end{example}

Inspired by the examples above, in this paper, we will formulate and tackle a novel problem, namely \textit{top-$k$ community similarity search} ($Top\text{-}kCS^2$), which efficiently and effectively obtains top-$k$ spatial communities that are similar and spatially close to a given query community over road networks. 

Note that, efficient and effective answering of the $Top\text{-}kCS^2$ query is rather challenging. A straightforward method to process the $Top\text{-}kCS^2$ query is to enumerate all possible communities (subgraphs) in road networks, compute the similarity/distance between each community and the query community, and return $k$ communities with higher similarities  than a given threshold, $\theta$ and small distances. However, this straightforward method is not very efficient, due to the large number of candidate communities to refine on road networks. What is more, it is not very trivial how to accurately define the similarity between two communities that captures their graph structural and POI similarities.

To our best knowledge, prior works (e.g., community search in (geo-)social networks) did not consider finding similar/close spatial communities on large-scale road-network graphs. Therefore, previous techniques cannot be directly applied to solve our $Top\text{-}kCS^2$ problem. In order to tackle the challenges of processing $Top\text{-}kCS^2$ queries, in this paper, we will propose a novel metric to measure the similarity between two spatial communities in road-network graphs, design effective pruning strategies (w.r.t. similarity and distance) to reduce the $Top\text{-}kCS^2$ search space, as well as an effective indexing mechanism, and develop an efficient $Top\text{-}kCS^2$ processing algorithm via index that integrates our proposed pruning methods.

Furthermore, we also formulate and tackle a variant, that is, continuous $Top\text{-}kCS^2$ problem (denoted as $CTop\text{-}kCS^2$), where the query community moves along a query line segment (e.g., surrounding communities from home to the working place). We propose to split the query line segment into multiple intervals (each with the same query community), and design efficient algorithms to monitor and maintain top-$k$ communities that are similar to the query community for each interval.


In this paper, we make the following contributions.
\begin{itemize}
    \item We formally define a novel problem, namely \textit{top-$k$ community similarity search} ($Top\text{-}kCS^2$) query, over road-network graphs in Section \ref{sec:prob_def}.
    \item We present the framework for answering the $Top\text{-}kCS^2$ query in Section \ref{sec:topk_frame}.
    \item We illustrate the offline pre-processing phase of the framework in Section \ref{sec:offline_processing}.
    \item We devise effective pruning methods for $Top\text{-}kCS^2$ in Section \ref{sec:prune}.
    \item We provide the heuristics for retrieving candidate unit patterns and develop an efficient algorithm to obtain such candidate unit patterns via the index traversal in Section \ref{sec:candidate_unit}.
    \item We design efficient algorithms to answer $CTop\text{-}kCS^2$ queries on road networks in Section \ref{sec:ctopksol}.
    \item We demonstrate the efficiency and effectiveness of our proposed $Top\text{-}kCS^2$ and $CTop\text{-}kCS^2$ approaches in Section \ref{sec:experiment}.
\end{itemize}

Section \ref{sec:related_work} reviews previous works on community search/detection in (geo-)social networks. Section \ref{sec:conclusion} concludes this paper. 

\section{Problem Definition}
\label{sec:prob_def}

In this section, we formally define \textit{top-k community similarity search over road-network graphs}.

\begin{figure}[ht]
    \centering
    \includegraphics[width=190pt]{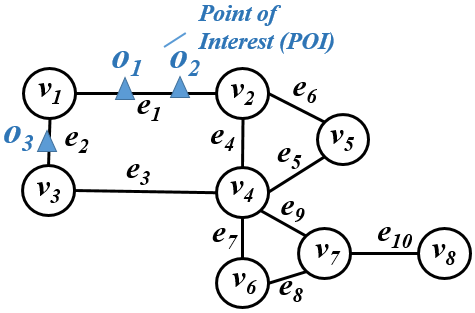}
    \caption{\small An example of a road-network graph $G$.}
    \label{fig:roadnetwork}
\end{figure}

\subsection{Road-Network Graphs}

\noindent {\bf Road Networks.} In this paper, we model road networks by a planar graph, defined as follows.

\begin{definition}
\textbf{(Road-Network Graph)} A road-network graph is a connected planar graph $G = (V(G), E(G), \Phi(G))$, where $V(G)$ and $E(G)$ are the sets of vertices and edges in graph $G$, respectively, and $\Phi(G)$ is a mapping function: $V(G) \times V(G) \to E(G)$. \label{def:graph}
\end{definition}

In Definition \ref{def:graph}, edges $e_i\in E(G)$ represent roads (line segments) in road networks $G$, where each edge $e_i$ is associated with its length $e_i.l$. Moreover, vertices $v_i \in V(G)$ correspond to intersection points of road segments.

\begin{example}
Figure \ref{fig:roadnetwork} shows an example of a road-network graph $G$, where the vertex set $V(G)$ = $\{v_1, v_2, ..., v_{8} \}$ and the edge set $E(G) = \{e_1, e_2, ..., e_{10}\}$. For example, edge $e_1$ is a road segment connecting 2 ending vertices $v_1$ and $v_2$. $\blacksquare$
\end{example}

\noindent {\bf Points of Interest.} On edges $e_i \in E(G)$ of road networks $G$, there are a number of \textit{points of interest} (POIs), such as restaurants and movie theatres, which are defined as follows:

\begin{definition}
\textbf{(Points of Interest, POI)} Given a road-network graph $G$, a \textit{point of interest} (POI), $o_j$, is a facility (object) located at $o_j.loc$ on an edge $e_i\in E(G)$.\label{def:POI}
\end{definition}

In Definition \ref{def:POI}, POIs on edge $e \in E(G)$ can be of various types, such as restaurants, shopping malls, supermarkets, cinemas, schools, churches, houses, and so on. We can represent all POIs on edge $e \in E(G)$ by a POI vector, $e.vec$, which consists of counts (frequencies) of different POI types on edge $e$. For example, assume that we only consider 4 types of POIs, restaurant, church, house, and school. If an edge $e$ contains 2 restaurants, 1 church, 3 houses, and 5 schools, then its POI vector $e.vec$ is given by $e.vec = (2, 1, 3, 5)$.

\begin{example}
\noindent As illustrated in Figure \ref{fig:roadnetwork}, there are two POI objects $o_1$ and $o_2$ on edge $e_1$, where object $o_1$ represents a house and object $o_2$ is a school. Thus, the POI vector, $e_1.vec$, of edge $e_1$ is given by $(1, 1)$.   $\qquad$  $\blacksquare$
\end{example}

\subsection{The Spatial Community in Road-Network Graphs}

Before we define the spatial community on road networks, we first discuss \textit{patterns} and \textit{unit patterns} in road-network graphs.

\vspace{1ex}\noindent {\bf Road-Network Patterns.} As shown in Figure \ref{fig:patterns}, there are many possible structural patterns in road-network graphs, which may indicate different scenarios of road-network designs. For example, in city areas, it is very likely that we have a large number of \textit{grid} structures of rectangular shape, due to the block system planning. Note that, the grid pattern has a lot of intersections and short roads, and a study \cite{roundabout} has shown that the number of accidents is higher for this pattern type than others. 

The \textit{tree} pattern is another common pattern found on road networks. This type of pattern is very common in residential areas, where houses, churches, and/or schools are usually located at the leaves of trees. 

The \textit{radial} pattern is composed of a network of roads, which radiate from a core. Such a pattern type usually indicates a business area. 

The \textit{mesh} pattern is usually the results of unplanned road networks, where there are a lot of structures of different pattern types.

\begin{figure}[t!]
    \centering \vspace{-1ex}
    \includegraphics[width=230pt]{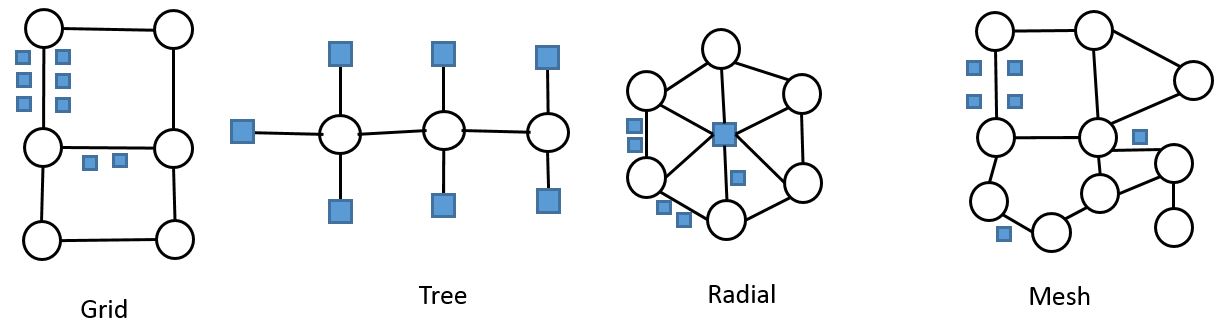}
    \caption{\small An example of patterns in road-network graphs.}
\end{figure}

\vspace{1ex}\noindent {\bf Unit Patterns in Road-Network Graphs.} Figure \ref{fig:structures} illustrates several basic patterns, called \textit{unit patterns}, on road networks, which include \textit{edge}, \textit{delta}, \textit{rectangle}, \textit{pentagon}, \textit{hexagon}, and so on. In particular, the \textit{edge} unit pattern is an edge (a road segment, but not in a circle), which can be a branch in the tree pattern or a dead end in residential areas. Similarly, the \textit{delta} unit pattern contains 3 vertices, forming a circular triangular structure. 

In this paper, we consider road networks as a planar graph. Thus, we can divide this planar road-network graph into non-overlapping unit patterns of different types. Intuitively, unit patterns such as rectangles correspond to blocks on road networks.

\begin{figure}[t!]
    \centering
    \includegraphics[width=230pt]{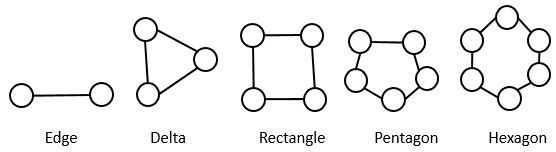}
    \caption{\small An example of unit patterns in road-network graphs.}
    \label{fig:structures}
\end{figure}

\vspace{1ex}\noindent {\bf Spatial Community.} Next, we give the definition of a spatial community in road-network graphs.

\begin{definition}

\textbf{(Spatial Community in a Road-Network Graph)}. Given a road-network graph $G$, a center vertex $v_c \in V(G)$, and a radius $r$, a \textit{spatial community}, $C_l$, is a subgraph of $G$ (i.e., $C_l \subseteq G$), such that:
\begin{enumerate}
    \item subgraph $C_l$ is connected, and;
    \item all unit patterns $c$ in $C_l$ have the minimum distances to vertex $v_c$ less than or equal to $r$ (i.e., $mindist(c, v_c) \leq r$),
\end{enumerate}
where $mindist(c, v_c)$ computes the minimum Euclidean distance from vertex $v_c$ to unit pattern $c$.
\end{definition}

Intuitively, a spatial community is a subgraph of road networks $G$, whose unit patterns (i.e., blocks) intersect with a circle centered at vertex $v_c$ and with radius $r$. 

Note that, in this paper, we assume that radius $r$ is a pre-defined system parameter, which can be specified by the system (e.g., the radius of lens in a visualization system) or tuned/inferred from historical users' preferences (e.g., 10 miles within users' driving distances).

\begin{example}
In the example of Figure \ref{subfig:com_example}, assume that we have a center vertex $v_4$, and a radius $r$. Then, a spatial community, $C_4$, centered at vertex $v_4$ and with radius $r$, is given by a subgraph with vertices $\{v_1, v_2, v_3, v_4, v_5,v_6, v_7,v_8\}$ and edges $e_1 \sim e_{10}$. Note that, edge $e_2$ is considered to be inside the community $C_4$, since it is a part of the \textit{rectangle} unit pattern (i.e.,  $\square v_1v_2v_3v_4$), denoted as $c_3$,  which partially intersects with a circle centered at vertex $v_4$ and with radius $r$. $\qquad$  $\blacksquare$
\end{example}

\begin{figure}[t!]
\centering
\subfigure[][{\small A spatial community $C_4$ }]{
\scalebox{0.35}[0.35]{\includegraphics{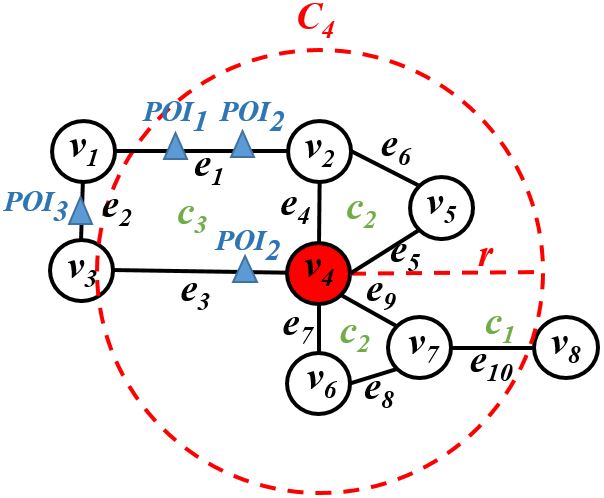}}
\label{subfig:com_example}
}
\subfigure[][{\small POI vector of $C_4$ }]{\hspace{-1ex}
\scalebox{0.3}[0.3]{\includegraphics{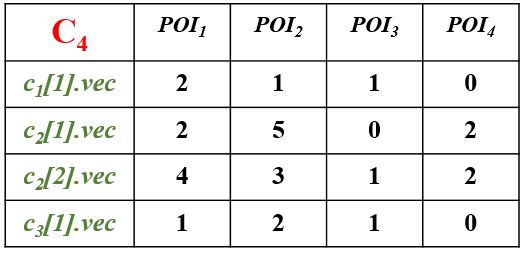}}
\label{subfig:POI_vec}
}
\caption{\small An example of a spatial community.}
\label{fig:community}
\end{figure}

\subsection{Similarity Between Two Communities}

In this subsection, we first propose a similarity metric to measure the similarity between two unit patterns, and then provide the definition of the similarity score between two communities.

\vspace{1ex}\noindent {\bf The Similarity Score Between Unit Patterns.} We first give the definition of the similarity score between two unit patterns. In particular, for two unit patterns of the same type (e.g., delta or rectangle), we define their similarity based on their POIs via the \textit{cosine similarity} \cite{wiki:xxx}.

\begin{definition}
\textbf{(The Similarity Score of Two Unit Patterns)} Assume that we have two unit patterns $c_x$ and $c_y$, whose POI vectors are represented by $c_x.vec$ and $c_y.vec$, respectively. Then, we can compute their similarity score as:
\begin{equation} 
sim(c_x, c_y) = cos\_sim(c_x.vec, c_y.vec), 
\label{eq:structure_score}
\end{equation}
\noindent where function $cos\_sim(c_x.vec, c_y.vec)$ outputs the cosine similarity \cite{wiki:xxx} between vectors $c_x.vec$ and $c_y.vec$.
\end{definition}

In particular, given two vectors $A = (A_1, A_2, \cdots, A_n)$ and $B = (B_1, B_2,$ $\cdots,$ $B_n)$, the cosine similarity, $cos\_sim(A, B)$, in Eq.~(\ref{eq:structure_score}) is given by the normalized dot product of vectors $A$ and $B$ as follows:
\begin{eqnarray} 
cos\_sim (A, B) = \frac{A \cdot B}{||A||\cdot ||B||} = \frac{\sum^n_{h=1} A_h B_h }{ \sqrt{\sum^n_{h=1} A^2_h} \sqrt{\sum^n_{h=1} B^2_h}}.\hspace{-3ex}\notag\\
\label{eq:cosine_sim}
\end{eqnarray}

Note that, in Eq.~(\ref{eq:cosine_sim}), we assume that vectors $A$ and $B$ (or POI vectors of unit patterns in Eq.~(\ref{eq:structure_score})) are normalized to have length 1 (i.e., $||A|| = ||B|| = 1$). As a result, we have:
\begin{eqnarray}
cos\_sim (c_x.vec, c_y.vec) &=& c_x.vec \cdot c_y.vec\\
&=& \sum^n_{h=1} \big(c_x.vec[h] \times c_y.vec[h]\big).\notag
\label{eq:cosine_sim_norm}
\end{eqnarray} 

For example, assume that we have a query unit pattern (edge) $q_1[1]$, whose POI vector is given by $q_1[1].vec$ = $(2, 2, 1 , 1)$. From Figure \ref{fig:community}, we have $c_1[1]$ which is an edge similar to $q_1[1]$, where $c_1[1].vec = (2, 1, 1, 0)$. Now, the similarity score between unit pattern $c_1[1].vec$ and $q_1[1].vec$ can be calculated as $cos\_sim(q_1[1].vec, c_1[1].vec)$, which is equal to 7 ($=2\times 2+2\times 1+1\times 1+1\times 0$).

\vspace{1ex}\noindent {\bf The Similarity Score Between Spatial Communities.} 
The similarity score between a candidate community $C_l$ and a query community, $Q$, can be calculated below.

\begin{definition}
\textbf{(The Similarity Score Between Two Communities).} Given a community $C_l$, a query community, $Q$, and their unit patterns (of the $h$-th type) $c_h \in C_l$ and $q_h \in Q$ $(1 \leq h \leq n)$, the similarity score, $sim(C_l, Q)$, between communities $C_l$ and $Q$ is given by the average cosine similarity of POI vectors of each unit pattern type in $c_h$ and $q_h$, that is,
\begin{eqnarray}
&&sim(C_l, Q) \label{eq:community_score}\\
&=&  \frac{\sum^n_{h=1} sim(c_h,q_h)}{n} \notag \\
         &=& \sum^n_{h=1} \frac{\bigg \{ \frac{\sum_{i=1}^{|c_h|}\sum_{j=1}^{|q_h|} cos\_sim(c_h[i].vec, q_h[j].vec)}{|q_h|}\bigg \}}{n} \notag \\
         &=& \sum^n_{h=1}\frac{\sum_{i=1}^{|c_h|}\sum_{j=1}^{|q_h|} cos\_sim(c_h[i].vec, q_h[j].vec)}{|q_h| \cdot n}, \hspace{-3ex} \notag
\end{eqnarray}
\noindent where $|q_h|$ is the number of unit patterns of the $h$-th shape in the query community $Q$, and $c_h[i]$ (or $q_h[j]$) is the $i$-th (or $j$-th) unit pattern (with the $h$-th shape) in $C_l$ and $Q$, respectively.
\end{definition}

Intuitively, the community $C_l$ may contain $n$ unit pattern types, the $h$-th of which may have $|c_h|$ instances of such a unit pattern type in $C_l$. The case of the query community $Q$ is similar. Thus, in Eq.~(\ref{eq:community_score}), for the $h$-th unit pattern type, we can compute the summed similarity between unit patterns $c_h[i]$ and $q_h[j]$, divide it by $|q_h|$, and then take the average score for all the $n$ unit pattern types.

As an example, in Figures \ref{fig:community} and \ref{fig:communities}, we have a candidate community $C_4$ and a query community $Q$, respectively. Here, $C_4$ and $Q$ have three types of unit patterns, edge $(c_1)$, delta $(c_2)$ and rectangle $(c_3)$, with counts $(1, 2, 1)$ and $(1, 1, 2)$, respectively. Thus, based on Eq.~(\ref{eq:community_score}), the similarity score $sim(C_4, Q)$ can be calculated as the average similarity of all three unit pattern types in $C_4$ and $Q$, that is, $ \frac{sim(c_1, q_1) + sim(c_2, q_2) + sim(c_3, q_3)}{3}$.

\subsection{Top-$k$ Community Similarity Search in Road-Network Graphs}
We next define the problem of \textit{top-$k$ community similarity search} ($Top\text{-}kCS^2$).

\begin{definition}
\textbf{(Top-$k$ Community Similarity Search in Road-Network Graphs, $Top$-$kCS^2$)}  Given a query community $Q$, a road-network graph $G$, a query location $v_q$, and a similarity threshold $\theta$, a \textit{top-$k$ community similarity search} ($Top\text{-}kCS^2$) query retrieves $k$ communities, $C_l$ (for $1 \leq l \leq k$), from $G$, such that:

\begin{itemize}
    \item similarity scores $sim(C_l, Q)$ are greater than or equal to $\theta$ (i.e., $sim(C_l, Q)\geq \theta$, and;
    \item for any community $C_j$ (satisfying $sim(C_j, Q)\geq \theta$ and $C_j\ne C_l$), we have $dist(v_q, C_l)<dist(v_q, C_j)$ (i.e.,     communities $C_l$ are the closest to $v_q$),
\end{itemize}
where the distance function $dist(v_q, C_l)$ is given by the Euclidean Distance \cite{wiki:euclidean} from $v_q$ to the center $v_l$ of the community $C_l$.
\label{def:topkCS2}
\end{definition}

\begin{figure}[t!]
    \centering
    \includegraphics[width=200pt]{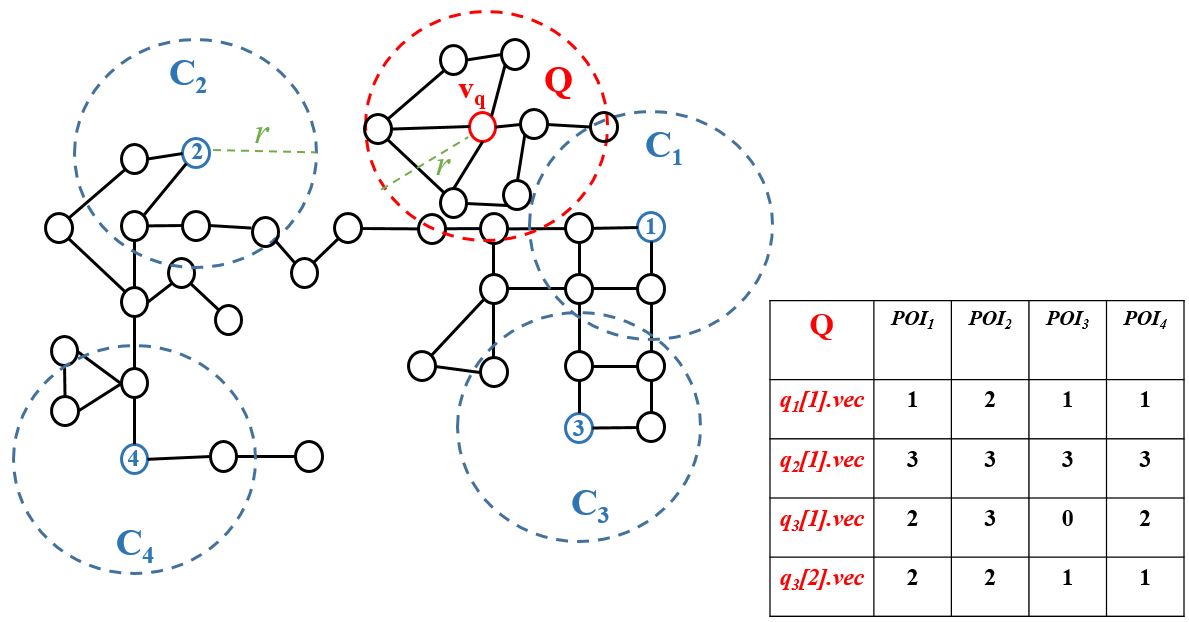}
    \caption{\small An example of communities in a large road-network graph.}
    \label{fig:communities}
\end{figure}

For simplicity, in this paper, we consider the distance, $dist(\cdot, \cdot)$, from the center, $v_q$, of the query community $Q$ to spatial communities $C_l$. Our proposed solution can be easily extended to the one with arbitrary query location $v_q$, not limited to $Q$'s center  (e.g., a new working place in a new city/county). 

As an example in Figure \ref{fig:communities}, we have a query community $Q$, a query vertex, $v_q$, and a radius $r$. In the figure, we have some candidate communities $\{C_1, C_2, C_3, C_4 \}$. Assume that the similarity scores of communities $C_1$, $C_2$, $C_3$, and $C_4$ are 0.7, 0.5, 0.35 and 0.5, respectively. Moreover, the distances (in miles) from $v_q$ to communities $C_1$, $C_2$, $C_3$, and $C_4$ are 0.6, 0.2, 0.55, and 0.4, respectively. If the similarity threshold $\theta$ is $0.5$ and $k =1$, then the $Top\text{-}kCS^2$ problem will return $C_2$ as the answer. This is because the similarity score between $C_2$ and $Q$ is greater than or equal to $0.5$ (i.e., $\theta$) and community $C_2$ has the smallest distance to $v_q$ among communities $C_1 \sim C_4$.

\subsection{Continuous Top-$k$ Community Similarity Search in Road-Network Graphs}

In contrast to the $Top\text{-}kCS^2$ problem with a static query community $Q$, we also define the problem of \textit{continuous top-$k$ community similarity search} ($CTop\text{-}kCS^2$), where the query community is continuously moving between two locations. The $CTop\text{-}kCS^2$ problem can be used in real applications such as the recommendation of communities that are similar to query communities from one's home to working place.

\begin{definition}
\textbf{(Continuous Top-$k$ Community Similarity Search in Large Scale Road-Network Graphs, $CTop\text{-}kCS^2$)}  Given a query line segment $L$ ($= q_{st}, q_{ed}$) and a radius $r$, let a query community, $Q_i$, be a community centered at any point $q_i$ on line segment $L$ with radius $r$. Given a road-network graph $G$, a query location $v_q$, and a similarity threshold $\theta$, a \textit{continuous top-$k$ community similarity search} ($CTop\text{-}kCS^2$) query continuously monitors top-$k$ communities, $C_l$ ($1 \leq l \leq k$), such that:

\begin{itemize}
    \item similarity scores $sim(C_j, Q_i)$ are greater than or equal to $\theta$ (i.e., $sim(C_j, Q_i) \geq \theta$), and;
    \item for any community $C_x$ (satisfying $sim(C_x, Q_i)\geq \theta$ and $C_x \ne C_l$), we have $dist(v_q, C_l)<dist(v_q, C_x)$ (i.e., communities $C_l$ are the closest to $v_q$),
\end{itemize}
where $Q_i$ is the query community with center $q_i \in L$ moving from $q_{st}$ to $q_{ed}$.
\label{def:ctopkCS2}
\end{definition}

Table \ref{tab:notations} depicts the commonly-used notations in this paper and their descriptions.

\begin{table}[htbp]
\caption{Notations and Descriptions}\vspace{-3ex}
\begin{center}
\begin{tabular}{|c||l|}
\hline
\textbf{Notation}&\textbf{Description} \\
\hline\hline
    $o$ & a point of interest (POI)\\\hline
    $c$ & a unit pattern\\\hline
    $C_l$ & a spatial community in road-network graph $G$\\\hline
    $Q$ & a given query community\\\hline
    $n$ & the total no. of unit pattern types\\\hline
    $c_h$ & a unit pattern of type $h$ $(1\leq h \leq n)$ in community $C_l$\\\hline
    $q_h$ & a unit pattern of type $h$ in query community, $Q$\\\hline
    $c_h[i].vec$ & a POI vector for $i$-th unit pattern of $c_h$ ($1 \leq i \leq |c_h|$)\\\hline
    $q_h[j].vec$ & a POI vector for $j$-th unit pattern of $q_h$ ($1 \leq j \leq |q_h|$)\\\hline
    $|c_h|$ & the count of the unit pattern of type $h$ in $C_l$\\\hline
    $|q_h|$ & the count of the unit pattern of type $h$ in $Q$\\
\hline
\end{tabular}
\label{tab:notations}
\end{center}
\end{table}

\section{The Framework for $Top\text{-}kCS^2$ Query Answering}
\label{sec:topk_frame}
In this section, we present a framework for $Top\text{-}kCS^2$ query answering in road-network graphs, $G$, in  Algorithm  \ref{alg:comans_framework}. 

\begin{algorithm}[ht!]
\KwIn{a road-network graph $G$, a similarity threshold  $\theta$, radius $r$, a query community $Q$, and a query vertex $v_q$}
\KwOut{top-$k$ communities, $C_l, (1 \leq l \leq k)$ with similarity scores $\geq \theta$}
\tcp{\hspace{-1ex}Offline Pre-Processing Phase \hspace{-1ex}}
detect all the unit patterns $c \in G$ \qquad \tcp{Algorithm \ref{alg:uc_detect}}
insert all the unit patterns into an aR-tree $I$\\
obtain some communities $C_l (1 \leq l \leq |V|)$ containing each unit pattern and update aggregates in aR-tree $I$ \quad \text{ } \tcp{Algorithm \ref{alg:comm_calc}}
\tcp{\hspace{-1ex}Online Computation Phase \hspace{-1ex}}
\For{each unit pattern type $q_h \in Q$, $1\leq h \leq n$}{
    \For{each unit pattern $q_h[i], 1 \leq i \leq |q_h|$}{
    find a set of unit patterns similar to $q_h[i]$ via index $I$  \tcp{Algorithm \ref{alg:index_traversal}}
    }
}
sort candidate unit patterns based on their similarity scores\\
obtain a list of candidate communities, $cand\_list$, based on the sets of candidate unit patterns w.r.t. $q_h \in Q$\\
\For{each candidate community $C_l \in cand\_list$}{
    calculate an upper bound, $ub\_sim(C_l, Q)$, of the similarity score $sim(C_l, Q)$\\
    \uIf{$ub\_sim(C_l,Q) < \theta$}{
    prune community $C_l$
    }
    \uElse{
    calculate the exact score of $C_l$, $sim(C_l, Q)$\\
    \uIf{$sim(C_l,Q) < \theta$}{
        prune community $C_l$
    }
    \uElse{
     \eIf{$comm\_count < k$}{
            add $C_l$ to a sorted top-$k$ list $ans\_list$\\
            $comm\_count$++
        }{
            \If{$dist(C_l, v_q) < dist(C_k, v_q)$}{
            add $C_l$ to the top-$k$ list $ans\_list$\\
            remove $C_k$ from the top-$k$ list $ans\_list$
            }
        }
    }
    }
}
return the top-$k$ answer list $ans\_list$

\caption{$Top\text{-}kCS^2$ Answering Framework}
\label{alg:comans_framework}
\end{algorithm}

Specifically,  our framework helps retrieve top-$k$ similar communities that satisfy the similarity threshold when compared to the query community, $Q$, and are closer to the query vertex, $v_q$, which consists of \textit{offline pre-processing} and \textit{online computation phases}.

In the \textit{offline pre-processing phase}, we first detect all the unit patterns, $c$, on road networks $G$ (line 1), by invoking our proposed algorithm, $Get\_Unit(G)$, in Algorithm \ref{alg:uc_detect} (as will be discussed in Section \ref{subsec:unit_pattern_detection}). Then, we insert all the unit patterns, $c \in G$, into an aggregate R-tree index, $I$, that is, aR-tree \cite{lazaridis2001progressive}, and offline pre-compute all the communities (with radius $r$) in the road-network graph $G$ (via Algorithm \ref{alg:comm_calc}), whose statistics (e.g., lower/upper bounds of pattern counts) can be used as aggregates for unit patterns in the aR-tree (lines 2-3). 

In the \textit{online computation phase}, for each unit pattern $q_h$ in the query community $Q$, we use the aR-Tree to retrieve a set of similar candidate unit patterns, in descending order of similarity scores (lines 4-7). Next, we use these candidate unit patterns, with respect to $q_h$ ($1 \leq h \leq n$), to obtain a number of candidate communities $C_l$ in a list $cand\_list$ (line 8). For each candidate community, $C_l \in cand\_list$, we first calculate the upper bound similarity score, $ub\_sim(C_l, Q)$ (lines 9-10). If the similarity upper bound score of $C_l$ is less than threshold $\theta$ (i.e. $ub\_sim(C_l, Q) < \theta$), we can safely prune the community $C_l$ (lines 11-12). Otherwise, we calculate the exact similarity score, $sim(C_l, Q)$, for candidate community $C_l$ (line 14). If it holds that $sim(C_l, Q) < \theta$, then we can safely rule out community $C_l$ (lines 15-16). On the other hand, if $sim(C_l, Q)\geq \theta$ holds, we will check whether we have $k$ candidate communities in the current top-$k$ list, $ans\_list$ (lines 17-24). When the count, $comm\_count$, of communities in the current top-$k$ list $ans\_list$ is less than $k$, we can directly add community $C_l$ to this list and increase the count, $comm\_count$, by 1 (lines 18-20). When $comm\_count$ is equal to $k$, we will consider the constraint of the distance of community $C_l$ to query vertex $v_q$. If $C_l$ is closer than the $k$-th closest community $C_k$ in the top-$k$ list $ans\_list$, then we remove community $C_k$ from the list and insert $C_l$ into the top-$k$ list $ans\_list$ (lines 22-24). Finally, we return the top-$k$ answer list, $ans\_list$, after checking all candidate communities in $cand\_list$ (line 25).

\section{Offline Pre-Processing Phase}
\label{sec:offline_processing}

In this section, we discuss the offline pre-processing phase in the framework for our $Top\text{-}kCS^2$ query processing, as given in the first three lines of Algorithm \ref{alg:comans_framework}.

\subsection{Unit Pattern Detection}
\label{subsec:unit_pattern_detection}

In line 1 of the offline computation phase (Algorithm \ref{alg:comans_framework}), we need to detect all the unit patterns, $c$, on road networks $G$. As illustrated in Figure \ref{fig:structures}, in addition to the \textit{edge} unit pattern, each polygon with a simple cycle on the planar graph $G$ is considered as a unit pattern. In order to identify these unit patterns, we present an algorithm, denoted as $Get\_Unit(G)$, which is given by Algorithm \ref{alg:uc_detect}.  

The basic idea of Algorithm $Get\_Unit(G)$ is as follows. For each vertex $v_i\in V(G)$ in road network $G$, we start to traverse the graph from $v_i$ in a clockwise direction (i.e., always choosing the rightmost edges to turn at intersection points) to detect unit patterns containing $v_i$. To avoid traversing the same directed edge multiple times, we utilize a boolean array, $e\_arr[v_x, v_y]$, where each boolean element in the array indicates whether an edge from direction $v_x$ to $v_y$ has been visited before. Similarly, to avoid detecting the same unit patterns multiple times, we use a hashmap, $hash\_unit$, to record unit patterns with their hash values to check whether a unit pattern has been detected before.

\begin{algorithm}[ht]
\KwIn{a road-network graph $G$}
\KwOut{a set of similar unit patterns, $up\_list$, for graph $G$}

$up\_list=\emptyset$\\
initialize the starting vertex, $start$ (= $v_i, 1 \leq i \leq |V|$)\\
initialize an array, $e\_arr$, of the visited states for all directed edges, $v_xv_y$, as not visited (i.e., $e\_arr[v_x, v_y] = false$)\\
initialize hashmap, $hash\_unit$, to store found unit patterns\\
\For{each starting vertex, $start \in V(G)$}{

    load the neighbors, $v_i$, of $start$ to a list, $list$\\
    
    \For{each neighbor $v_i$ in $list$}
    {
        \If{$e\_arr[start, v_i] == true$}{
            continue;   \tcp{the unit pattern has been detected}
        }
        
        $pattern = \{start, v_i\}$\\
        $next = v_i$\\
        $hop\_count = 1$\\

        {\color{black}
        \While{ $next$ $!=$ $NULL$
        }{
            obtain the neighbor vertex $v_j$ of the current vertex $next$ with the smallest angle on a clockwise direction (i.e., $v_j = get\_next(next, V(G), E(G))$)\\
            \If{start == $v_j$}{
            \tcp{the unit pattern has been detected}
            decide the type of the unit pattern $pattern$ based on $hop\_count$\\
            add the unit pattern $pattern$ to $up\_list$ \\    
            add the hash value of unit pattern to the hash bucket, $hash\_unit$\\
            $\forall_{p_i, p_j \in pattern}$ $e\_arr[p_i, p_j] = true$ \\
            $pattern=\emptyset$\\
            break;
            }
            $pattern = pattern \cup\{v_j\}$\\
            $next = v_j$\\
            $hop\_count$++
        }
        }
    }
}
return $up\_list$

\caption{Unit Pattern Detection, $Get\_Unit(G)$}
\label{alg:uc_detect}
\end{algorithm}

\noindent Specifically, in Algorithm \ref{alg:uc_detect}, we first initialize an empty unit pattern list, $up\_list$ (line 1). Then, we use the starting vertex, $start$, to start the graph traversal and detect unit patterns, which can be any vertex $v_i$ from the vertex set $V(G)$ in road-network graph $G$ (line 2). Moreover, we utilize a boolean array $e\_arr$ to set all the directed edges $v_xv_y$ as not visited (i.e., $e\_arr[v_x, v_y] = false$; line 3) and a hashmap, $hash\_unit$ to store the found unit patterns (line 4). Note that, if we always detect unit patterns in a clockwise direction (i.e., always choosing the rightmost turn at intersection points), then each directed edge will only be accessed once. Thus, here, we use each hash value, $get\_hash(pattern)$, in the hashmap, $hash\_unit$ to indicate whether a unit pattern (containing vertices in the array, $pattern$) has been detected before. 

Then, with a for loop (lines 5-23), we iteratively find unit patterns, $pattern$, with different starting vertices $start$ in $G$. In particular,
for each starting vertex $start$, we load all the neighbors, $v_i$, of the starting vertex, $start$, to a list, $list$ (line 6). For each neighbor $v_i \in list$, we check whether or not the unit pattern has been detected before (i.e., $e\_arr[start, v_i] == true$). If the answer is yes, then we will continue to check the next unit pattern (lines 7-9). Otherwise, we add $v_i$ to the array $pattern$ and we assign $v_i$ as $next$ vertex (lines 10-11). In addition, we maintain a hop count, $hop\_count$, for the detected unit pattern $pattern$ (lines 11 and 24). Next, we run a while loop until we hit a $NULL$ (which means there is no path to traverse, line 13) or if we reach the starting node $start$ (line 15). Until there is a path (i.e. $next$ is not $NULL$) we will find the next vertex in a clockwise manner (i.e. $v_j$ = get(next, V(G), E(G)), line 14). If the next vertex is the start vertex, we found a unit pattern, we decide the type of unit pattern based on the $hop_count$ (line 16). We also add the unit pattern to unit pattern list, $up\_list$ (line 17), add the hash value of unit pattern to hash bucket, $hash\_unit$ (line 18), mark all the edges in the array $pattern$ as visited (line 19). Then we empty the array $pattern$ (line 20) and then exit the while loop (line 21). After all unit patterns have been detected in graph $G$, we will return this list, $up\_list$, as the output of the algorithm (line 25).

\subsection{The Community Calculation}

In line 3 of our $Top\text{-}kCS^2$ framework (Algorithm \ref{alg:comans_framework}), we will invoke Algorithm \ref{alg:comm_calc} to obtain all unit patterns in each candidate community $C_i$ (centered at vertex $v_i$ and with radius $r$) via index. In other words, we can retrieve those communities (as well as their statistics) for each unit pattern in graph $G$, and update statistics/aggregates (e.g., lower/upper bounds of pattern counts) in the aR-tree index $I$. The details of aggregates in aR-tree $I$ will be discussed  in Section \ref{subsec:index}.

\begin{algorithm}[t!]
\KwIn{a vertex set, $V(G)$, of graph $G$, radius $r$, and an aR-tree $I$}
\KwOut{a set of communities, $C = \{C_1, C_2, \cdots, C_{|V|}\}$}

\For{each vertex $v_i \in V(G)$ ($1 \leq i \leq |V|$)}{
    load the root of aR-tree $I$ to an empty heap $H$\\
    \While{$H$ is not empty}{
        pop out an MBR node $M$ from the heap $H$\\
        \eIf{$M$ is a leaf node}{
            \For{each unit pattern, $pattern$, in $M$}{
                \If{$pattern$ intersects with a circle centered at $v_i$ with radius $r$}{
                    add $pattern$ to community $C_i$
                }
            }
        }{
            \For{each child node, $M_j$, in $M$}{
                \eIf{$M_j$ is fully inside a circle centered at $v_i$ with radius $r$}{
                    add all unit patterns under node $M_j$ to community $C_i$
                }{
                    \If{$M_j$ partially intersects with a circle centered at $v_i$ with radius $r$}{
                        insert the child node $M_j$ into heap $H$
                    }
                }
            }
        }
    }
} 
return $C = \{C_1, C_2, \cdots, C_{|V|}\}$
\caption{Community Calculation}
\label{alg:comm_calc}
\end{algorithm} 

Specifically, Algorithm \ref{alg:comm_calc} takes a vertex set, $V(G)$, of the graph $G$, a radius $r$, and an aR-tree $I$ of graph $G$. It returns the set of communities $C_i$ for each vertex $v_i\in V(G)$ in graph $G$. In particular, for each vertex $v_i \in V(G)$, we load the root of the aR-tree $I$ into an initially empty heap $H$ (line 2). Next, we traverse the aR-tree to obtain all unit patterns inside community $C_i$, by using heap $H$ (lines 3-15).
If heap $H$ is not empty, we pop out an MBR node from heap $H$ (lines 3-4). If MBR node $M$ is a leaf node, for each unit pattern, $pattern$, in $M$, we check whether or not $pattern$ belongs to the community $C_i$ (i.e., within a circle centered at $v_i$ with radius $r$). If the answer is yes, then we will add $pattern$ to community $C_i$ (lines 5-8). 

On the other hand, if node $M$ is a non-leaf node, we check each entry $M_j$ in node $M$ (lines 9-15). If it holds that $M_j$ is fully inside the circle centered at $v_i$ and with radius $r$, then all unit patterns under node $M_j$ are added to community $C_i$ (lines 11-12). Otherwise, if $M_j$ is partially intersecting with the circle, then we will insert $M_j$ into heap $H$ for further refinement (lines 13-15). Finally, after the index traversals, we obtain and return all candidate communities $C_1 \sim C_{|V|}$ (line 16). 

With these candidate communities, we can calculate the aggregates for those unit patterns in the aR-tree (e.g., lower/upper bounds of pattern counts), which can be used for pruning (as discussed later in Section \ref{sec:prune}).

\section{Pruning Heuristics}
\label{sec:prune}
In this section,  we discuss pruning heuristics for our $Top\text{-}kCS^2$ query. These pruning methods help us prune many false alarms of candidate communities for the $Top\text{-}kCS^2$ query and make the query processing more efficient and effective.

\subsection{Score Upper Bound Pruning} 
To find $k$ similar communities w.r.t a query community, $Q$, in the worst case, we may have to calculate the similarity score, $sim(C_l, Q)$, for each community $C_l \subseteq G$, which is not efficient. To avoid that, here, we propose a pruning technique, which utilizes an upper bound of the similarity score and helps effectively prune many false alarms of candidate communities.

\noindent {\bf Pruning with the Score Upper Bound:} For a given candidate community $C_l$, the similarity score (given by Eq.~(\ref{eq:community_score})) is calculated as the average score of similarities of POI vectors for each unit pattern type. However, it is rather costly and time-consuming to compute such a similarity score for each unit pattern and then take the average score. Therefore, our basic idea of the score upper bound pruning is to obtain an upper bound, $ub\_sim(C_l, Q)$, of the similarity score between communities $C_l$ and $Q$, and filter out those candidate communities with score upper bound less than the threshold $\theta$.

\begin{theorem} {\bf (Score Upper Bound Pruning)} Given a query community $Q$, a community $C_l$, and  a similarity threshold, $\theta$, we can prune a community $C_l$, if $ub\_sim(C_l,Q) < \theta$ holds.
\label{theorem:sim_ub}
\end{theorem}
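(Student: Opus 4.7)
\textbf{Proof Proposal for Theorem \ref{theorem:sim_ub}.} The plan is to argue by contrapositive: I will show that any community $C_l$ that survives as a valid $Top\text{-}kCS^2$ answer must satisfy $sim(C_l,Q) \geq \theta$, and then verify that $ub\_sim(C_l,Q)$ is genuinely an upper bound on $sim(C_l,Q)$. Once these two facts are in place, pruning is immediate: if $ub\_sim(C_l,Q) < \theta$, then $sim(C_l,Q) \leq ub\_sim(C_l,Q) < \theta$, so $C_l$ violates the first bullet in Definition \ref{def:topkCS2} and therefore cannot belong to the answer set. This means discarding $C_l$ cannot cause us to miss any valid answer, which is precisely what a correct pruning rule requires.

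The main technical step is establishing the inequality $sim(C_l,Q) \leq ub\_sim(C_l,Q)$. By Eq.~(\ref{eq:community_score}), the true similarity decomposes as a sum over unit-pattern types $h = 1,\ldots,n$ of per-type contributions of the form $\tfrac{1}{|q_h|\cdot n}\sum_{i,j}cos\_sim(c_h[i].vec, q_h[j].vec)$. My plan for the bound is to replace each per-type contribution with a term that is cheap to obtain from the aR-tree aggregates computed in Algorithm \ref{alg:comm_calc} (lower/upper bounds on pattern counts and on POI-vector values). Specifically, for each type $h$ I would pair every unit pattern $c_h[i]\in C_l$ with the query unit pattern in $q_h$ that maximizes cosine similarity, and further replace that maximum by the cosine similarity attainable against the MBR-aggregated POI-vector envelope. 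Summing these per-type maxima and normalizing by $|q_h|\cdot n$ as in Eq.~(\ref{eq:community_score}) yields $ub\_sim(C_l,Q)$, which by construction dominates $sim(C_l,Q)$ term by term.

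With the inequality secured, the pruning argument is just a one-line chain: assume $ub\_sim(C_l,Q) < \theta$; then $sim(C_l,Q) \leq ub\_sim(C_l,Q) < \theta$; hence the first condition of Definition \ref{def:topkCS2} fails; hence $C_l$ cannot be in the $Top\text{-}kCS^2$ output. Discarding $C_l$ therefore preserves correctness of the framework in Algorithm \ref{alg:comans_framework}.

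The hard part will be making the upper bound both valid \emph{and} tight enough to be useful in practice. Validity is a routine domination argument once $ub\_sim$ is spelled out, but I anticipate the delicate step to be handling the normalization by $|q_h|$ and $n$ when a unit pattern type present in $Q$ is absent in $C_l$ (or vice versa): one must be careful that missing types contribute $0$ rather than an ill-defined quotient, so that the inequality $sim(C_l,Q) \leq ub\_sim(C_l,Q)$ remains safe. A secondary subtlety is ensuring the cosine-similarity bound obtained from aggregated POI vectors at an internal aR-tree node genuinely dominates the cosine similarity against every descendant POI vector; this relies on $\|A\|=\|B\|=1$ (assumed after Eq.~(\ref{eq:cosine_sim})) so that the dot-product form admits straightforward coordinate-wise upper bounds.
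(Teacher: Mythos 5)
Your proposal is correct and takes essentially the same route as the paper's proof: establish that $ub\_sim(C_l,Q)$ dominates $sim(C_l,Q)$ by component-wise maximization of the POI vectors (so that each pairwise dot product is term-by-term dominated), then chain $sim(C_l,Q)\leq ub\_sim(C_l,Q)<\theta$ to conclude that $C_l$ fails the threshold condition of Definition \ref{def:topkCS2} and can be safely discarded. One minor remark: the coordinate-wise domination of the dot product actually hinges on the non-negativity of the POI counts rather than on the unit-norm assumption you cite, but this does not affect the validity of the argument, and your treatment of the pruning-safety logic and of missing pattern types is in fact more explicit than the paper's.
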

\begin{proof} Please refer to the proof in Appendix A.
\end{proof}

For example, in Figure \ref{fig:simscore}, we have two vectors of unit pattern type $c_2$, $c_2[1].vec$ and $c_2[2].vec$, and then we calculate the maximum vector, $c_2.max$, which helps us to calculate the upper-bound score. Here, we have a query unit pattern of type $q_2$, $q_2[1].vec$. If we calculate the similarity score, for $c_2[1].vec$ and $q_2[1].vec$, $sim(c_2[1].vec, q_2[1].vec)$ = $2 \times 3 + 5 \times 3 + 0 \times 3 + 2 \times 3 = 27$. Similarly, $sim(c_2[2].vec, q_2[1].vec)$ = $4 \times 3 + 3 \times 3 + 1 \times 3 + 2 \times 3 = 30$. Now, $ub\_sim(c_2, q_2)$ = $sim(c_2[1].max, q_2[1].vec)$ = $4 \times 3 + 5 \times 3 + 1 \times 3 + 2 \times 3 = 36$. Therefore, we know that the max vector will give the maximum/upper-bound similarity score.

\noindent{\bf The Calculation of the Score Upper Bound:} For a given graph, if we start calculating the similarity score for each community, $C_l \subseteq G$, we can get the exact score for each community. However, for a large road-network graph with millions of edges and vertices, there may be a huge number of communities, these calculations are time consuming and expensive. Thus, we propose an upper bound similarity calculation method which will help calculate the upper bound score for the similarity score and then quickly reduce the search space.

\begin{lemma}
Given a community, $C_l$, and a query community, $Q$, the similarity score upper bound, $ub\_sim(C_l, Q)$, is given as:
\begin{equation}
    ub\_sim(C_l, Q) = \sum_{h=1}^{n} cos\_sim(c_h.max, q_h.max),
\label{eq:ub_score}
\end{equation}
where $c_h.max$ and $q_h.max$ are the maximum vector for each unit pattern type $h$, for community $C_l$ and $Q$ respectively.
\label{lemma:ub_sim}
\end{lemma}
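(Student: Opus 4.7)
The plan is to establish Lemma \ref{lemma:ub_sim} by replacing every pairwise cosine similarity appearing in the $sim$ formula (Eq.~(\ref{eq:community_score})) with the cosine similarity of the coordinate-wise maximum POI vectors, and then aggregating over pattern types. First, I would formalize $c_h.max$: its $p$-th coordinate equals $\max_{1\le i\le |c_h|} c_h[i].vec[p]$, and similarly for $q_h.max$. Because POI vectors store non-negative counts (frequencies of POI types along edges), this definition immediately yields the componentwise domination $c_h[i].vec[p]\le c_h.max[p]$ and $q_h[j].vec[p]\le q_h.max[p]$ for every $h,i,j,p$, which is consistent with the worked example that precedes the lemma (where $c_2.max = (4,5,1,2)$ is the componentwise max of the two rectangle vectors $(2,5,0,2)$ and $(4,3,1,2)$).

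The key step is the pointwise cosine-similarity bound. Adopting the paper's convention from Eq.~(\ref{eq:cosine_sim}) under which $cos\_sim(u,v)=u\cdot v=\sum_p u[p]\,v[p]$ after normalization, the non-negativity of all POI entries together with the componentwise domination gives
\[
cos\_sim(c_h[i].vec, q_h[j].vec)\;\le\;cos\_sim(c_h.max, q_h.max)
\]
for every $h,i,j$, because each product $c_h[i].vec[p]\cdot q_h[j].vec[p]$ is bounded by $c_h.max[p]\cdot q_h.max[p]$ term-by-term. Substituting this into the $h$-th term of Eq.~(\ref{eq:community_score}) bounds the inner double sum above by $|c_h|\cdot|q_h|\cdot cos\_sim(c_h.max, q_h.max)$; dividing by $|q_h|\cdot n$ and summing over the $n$ unit pattern types then yields
\[
sim(C_l,Q)\;\le\;\sum_{h=1}^n \frac{|c_h|}{n}\, cos\_sim(c_h.max, q_h.max),
\]
which is majorized by the right-hand side of Eq.~(\ref{eq:ub_score}) whenever $|c_h|\le n$ for every type $h$.

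The hard part I anticipate is reconciling the $|c_h|/n$ scaling that emerges naturally from the double-sum-and-average in Eq.~(\ref{eq:community_score}) with the cleaner, multiplicity-free form stated in Eq.~(\ref{eq:ub_score}). Either an implicit normalization on the maximum vector must be invoked (for instance, scaling each POI vector by the pattern-type count before taking componentwise max, so that the multiplicity is absorbed into $c_h.max$), or the claimed upper bound must be read as dropping the $|c_h|/n$ factor under the regime where it does not exceed $1$. Before committing to the final phrasing I would verify the inequality numerically on the example immediately preceding the lemma --- there the left-hand contribution of type $h=2$ evaluates to roughly $19$ and the right-hand side to $36$, so the bound is loose but correct on that instance --- and then lift the verification to a general argument by induction on the number of unit pattern types $h$ or by a direct algebraic bound that makes the multiplicity dependence explicit.
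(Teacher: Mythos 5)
Your derivation follows the same route the paper intends: componentwise domination of the maximum POI vector over every individual POI vector, hence domination of the (non-negative) dot products, hence an upper bound on each pairwise cosine similarity. This is exactly the argument the paper gives in its proof of Theorem \ref{theorem:sim_ub} (Appendix A), which asserts only that substituting $c_h.max$ for each $c_h[i].vec$ can never decrease the score. Where you go beyond the paper is in actually pushing the bound through Eq.~(\ref{eq:community_score}): the double sum over $|c_h|\cdot|q_h|$ pairs, divided by $|q_h|\cdot n$, gives
\[
sim(C_l,Q)\;\le\;\sum_{h=1}^{n}\frac{|c_h|}{n}\,cos\_sim(c_h.max,\,q_h.max),
\]
and you are right that this coincides with (is majorized by) the stated $ub\_sim$ of Eq.~(\ref{eq:ub_score}) only under an extra condition such as $|c_h|\le n$ for every type, or under some renormalization the paper never specifies. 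The paper's proof silently discards both the multiplicity $|c_h|$ and the divisor $n$ and never confronts this mismatch, so the gap you flag is a genuine defect of the lemma as written rather than of your argument: a community containing more than $n$ unit patterns of one type (say, dozens of rectangles with only three pattern types) can make the left-hand side exceed the claimed bound. Your numerical check ($19\le 36$ for type $h=2$) passes only because $|c_2|=2\le n=3$ in that example.

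A minor additional caution: the paper alternates between unit-normalized POI vectors (Eq.~(\ref{eq:cosine_sim})) and raw count vectors (all worked examples). Your monotonicity step needs only non-negativity, so it survives either convention, but if the individual vectors really were normalized to length $1$ then $c_h.max$ generally would not be, which is a further inconsistency the lemma inherits. In short: same approach as the paper, executed more carefully, and correctly exposing that Eq.~(\ref{eq:ub_score}) is not a valid upper bound without an additional assumption on the multiplicities $|c_h|$.
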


Intuitively, in Lemma \ref{lemma:ub_sim}, we compute the maximum vector, $c_h.max$ and $q_h.max$, for each unit pattern in the candidate community $C_l$ and query community $Q$, respectively. By using these two vectors, we can obtain the upper bound of the similarity score.

\begin{figure}[t!]
    \centering
    \includegraphics[width=190pt]{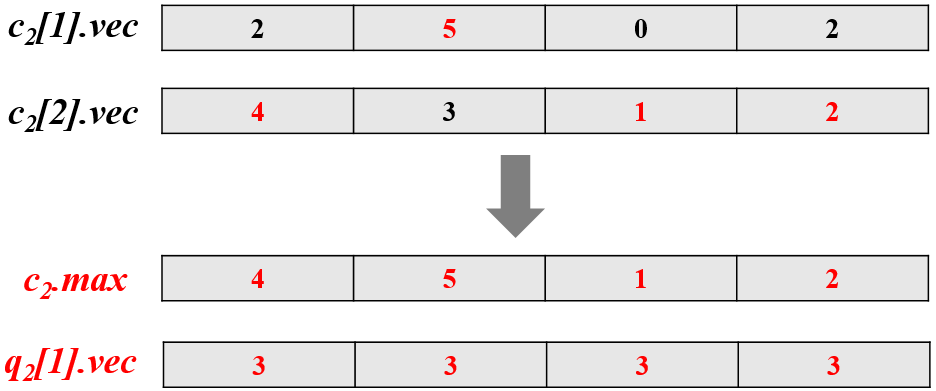}
    \caption{\small Maximum POI vector of unit pattern type $c_2$, $c_2.max$.}
    \label{fig:simscore}
\end{figure}

For example, in Figure \ref{fig:community}, we have a community, $C_4$, with three types of unit pattern $c_1$, $c_2$, and $c_3$. Here, unit pattern type $c_1$ has one unit pattern, $c_1[1]$, similarly, type $c_2$ has two unit patterns, $c_2[1]$ and $c_2[2]$, and $c_3$ has one unit pattern, $c_3[1]$. Each of this unit pattern contains a $POI$ vector represented as $c_1[1].vec$ (for unit pattern $c_1[1]$). Unit pattern type $c_2$ has two $POI$ vectors, $c_2[1].vec$ and $c_2[2].vec$. Thus, the maximum vector for $c_2$, $c_2.max$, is the maximum value vector of $c_2$ as shown in the Figure \ref{fig:simscore}. Similarly, we get the maximum $POI$ vector for each unit pattern type for the query community, $Q$.

\subsection{Distance Pruning}
In the $Top\text{-}kCS^2$ problem, we also consider the distance as an important factor for returning query answers. So, we can also prune communities based on the distance from the query community, $Q$.

\begin{theorem} {\bf (Distance Pruning)}
Assume that $k$ candidate communities $C_1$, $C_2$, ..., and $C_k$ satisfying $sim(C_i, Q) > \theta$ ($1\leq i\leq k$), where $C_k$ has the $k$-th largest distance, $dist(v_q, C_k)$, to the query community $Q$. Then, we can safely prune a community $C_l$, if it holds that $dist(v_q, C_l) \geq dist(v_q, C_k)$.\label{theorem:distancepruning}
\end{theorem}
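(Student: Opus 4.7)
The plan is to give a direct contradiction argument using the definition of $Top\text{-}kCS^2$ in Definition~\ref{def:topkCS2}. Recall that the query answer consists of exactly $k$ communities whose similarity scores exceed $\theta$ and which are the closest (in Euclidean distance from $v_q$) among all such qualifying communities. So I would argue that if $k$ known qualifying communities are already no farther from $v_q$ than $C_l$, then $C_l$ cannot occupy any of the top-$k$ slots.

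First I would set up the premises cleanly: let $\mathcal{S} = \{C_1, C_2, \ldots, C_k\}$ be the $k$ communities stated in the hypothesis, each satisfying $sim(C_i, Q) \geq \theta$, and arranged so that
\[
dist(v_q, C_1) \leq dist(v_q, C_2) \leq \cdots \leq dist(v_q, C_k).
\]
Next, suppose for contradiction that $C_l$ with $dist(v_q, C_l) \geq dist(v_q, C_k)$ is nevertheless a valid $Top\text{-}kCS^2$ answer. Then by Definition~\ref{def:topkCS2}, every non-answer qualifying community $C_j$ must satisfy $dist(v_q, C_l) < dist(v_q, C_j)$. However, each $C_i \in \mathcal{S}$ is itself a qualifying community with $dist(v_q, C_i) \leq dist(v_q, C_k) \leq dist(v_q, C_l)$, so at least $k$ qualifying communities are no farther from $v_q$ than $C_l$. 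Because Definition~\ref{def:topkCS2} reserves the $k$ answer slots for the strictly closest qualifying communities, $C_l$ can occupy at most the $(k{+}1)$-th position, contradicting the assumption that $C_l$ is in the top-$k$.

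The only subtlety I foresee, and the part I would be most careful about, is the handling of ties when $dist(v_q, C_l) = dist(v_q, C_k)$; since the theorem statement uses the non-strict inequality $\geq$, I would invoke a fixed tie-breaking rule (for instance, community identifier order, consistent with how $\mathcal{S}$ was selected) so that the $k$ communities in $\mathcal{S}$ are considered to occupy the $k$ closest positions and $C_l$ is displaced. With that convention, the contradiction still goes through. I expect this tie-breaking remark to be the main obstacle to making the argument fully rigorous; the rest is a direct appeal to the definition.
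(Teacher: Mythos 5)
Your proposal is correct and follows essentially the same route as the paper's proof: both observe that the $k$ qualifying communities $C_1,\ldots,C_k$ are all at distance no greater than $dist(v_q, C_l)$ from $v_q$, so by Definition~\ref{def:topkCS2} the community $C_l$ cannot occupy any of the $k$ answer slots; your contradiction framing is just a restatement of this direct argument. Your remark on tie-breaking when $dist(v_q, C_l) = dist(v_q, C_k)$ is a slightly more careful treatment than the paper gives, but it does not change the substance of the argument.
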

\begin{proof} Please refer to the proof in Appendix B.
\end{proof}

As an example in Figure \ref{fig:dist_prune}, consider 4 communities $C_1$, $C_2$, $C_3$, and $C_4$, where $k = 2$. Assume that we have obtained two communities $C_1$ and $C_2$, which satisfy the similarity constraint (w.r.t. $\theta$). Then, the second largest distance is $d_2$ ($=dist(v_q, C_2)=0.4$). Based on Theorem \ref{theorem:distancepruning}, we can safely prune communities $C_3$ and $C_4$, since their distances, 0.7 ($=d_3 = dist(v_q, C_3)$) and 0.9 ($=d_4 = dist(v_q, C_4)$), are greater than $d_2$.

\begin{figure}[t!]
\centering
\scalebox{0.4}[0.4]{\includegraphics{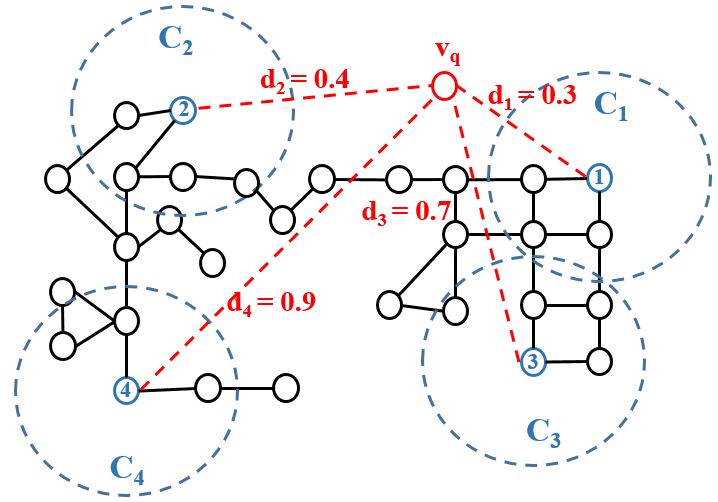}}
\caption{\small Illustration of the distance pruning.}
\label{fig:dist_prune}
\end{figure}

\section{Candidate Unit Pattern Retrieval}
\label{sec:candidate_unit}

In this section, we discuss about the method to obtain the candidate communities $C_l$ for a given query community $Q$, in line 6 of Algorithm \ref{alg:comans_framework}. 

\subsection{Theoretical Analysis}
\label{subsec:theorem}

Let $Q = \{q_1, q_2, \cdots, q_{n} \}$ be the set of unit patterns in $Q$. Here, each unit pattern type $q_h~ (1\leq h \leq n)$ has a set of unit patterns, i.e. $q_h$ = $\{ q_h[1], q_h[2],\cdots, q_h[|q_h|]\}$. The following theorem gives the search radius (w.r.t. query unit pattern $q_h[i]$) to retrieve candidate unit patterns $c_h[i]$.

\begin{theorem}
Given unit patterns $q_h \in Q$ (of the $h$-th type), and a similarity threshold, $\theta$, we say that, a unit pattern $c_h[j]$, is the candidate unit pattern, if the summed cosine similarity, $\sum_{i=1}^{|q_h|}cos\_sim(c_h[j], q_h[i])$ is greater than or equal to $\frac{\theta \cdot |q_h|}{\max\{|c_h|\}}$, that is,

\begin{equation}
 \sum_{i=1}^{|q_h|}cos\_sim(c_h[j], q_h[i]) \geq \frac{\theta \cdot |q_h|}{\max\{|c_h|\}},
 \label{eq:eq6}
\end{equation}
\noindent where $\max\{|c_h|\}$ is the maximum count of unit patterns of type $h$, for all communities containing $c_h[j]$.
\label{theorem:unit_pattern_retrieval}
\end{theorem}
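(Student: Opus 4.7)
The plan is to establish Eq.~(\ref{eq:eq6}) as a necessary condition that, for each candidate community $C_l$ with $sim(C_l, Q) \geq \theta$, must be satisfied by at least the strongest type-$h$ unit pattern in $C_l$; collecting every unit pattern that passes this test therefore guarantees that no candidate community is missed when the later community-assembly step stitches matches back together.

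I would start from the type-by-type decomposition of the community similarity in Eq.~(\ref{eq:community_score}),
\begin{equation*}
sim(C_l, Q) \;=\; \frac{1}{n}\sum_{h'=1}^{n} sim(c_{h'}, q_{h'}),
\end{equation*}
and invoke the per-type version of the threshold $sim(c_h, q_h) \geq \theta$, which is the natural filter applied independently for each query unit pattern type $q_h$ while traversing the aR-tree in Algorithm~\ref{alg:comans_framework}. Unpacking this inequality and multiplying through by $|q_h|$ yields $\sum_{i=1}^{|c_h|}\sum_{k=1}^{|q_h|} cos\_sim(c_h[i], q_h[k]) \geq \theta\cdot|q_h|$.

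Next I would apply an averaging / pigeonhole argument over the $|c_h|$ type-$h$ unit patterns actually present in $C_l$: dividing the above inequality by $|c_h|$ shows that the mean of $\sum_{k} cos\_sim(c_h[i], q_h[k])$ over $i$ is at least $\frac{\theta\cdot|q_h|}{|c_h|}$, so there must exist some $c_h[i^*] \in C_l$ achieving at least that value. Because $|c_h| \leq \max\{|c_h|\}$ (the maximum taken over all communities containing $c_h[i^*]$, which is precisely the aggregate precomputed for the aR-tree $I$ by Algorithm~\ref{alg:comm_calc}), we conclude
\begin{equation*}
\sum_{k=1}^{|q_h|} cos\_sim(c_h[i^*], q_h[k]) \;\geq\; \frac{\theta\cdot|q_h|}{|c_h|} \;\geq\; \frac{\theta\cdot|q_h|}{\max\{|c_h|\}},
\end{equation*}
and identifying $c_h[j] := c_h[i^*]$ delivers exactly Eq.~(\ref{eq:eq6}).

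The main obstacle will be the very first step: strictly speaking, $sim(C_l, Q) \geq \theta$ only forces $\sum_{h'} sim(c_{h'}, q_{h'}) \geq n\theta$, not $sim(c_h, q_h) \geq \theta$ for every individual type, since a large contribution from one type could in principle compensate for a shortfall in another. I would resolve this by treating the per-type threshold as the filter that the candidate-retrieval stage of Algorithm~\ref{alg:comans_framework} actually implements -- cross-type balancing is handled separately by the score upper bound pruning of Section~\ref{sec:prune} and by the subsequent refinement on exact $sim(C_l, Q)$ values -- so the per-type criterion is the correct target for the unit-pattern-level filter. Once that interpretation is settled, the remaining averaging step and the use of the aggregate bound $|c_h| \le \max\{|c_h|\}$ are routine manipulations of sums.
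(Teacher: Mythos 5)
Your proposal matches the paper's proof in its second half: the averaging (pigeonhole) step over the $|c_h|$ type-$h$ instances, which produces some $c_h[i^*]$ with $\sum_{j=1}^{|q_h|} cos\_sim(c_h[i^*], q_h[j]) \geq \frac{\theta\cdot|q_h|}{|c_h|}$, followed by the relaxation $|c_h|\leq \max\{|c_h|\}$ justified by the offline aggregate. That part is fine and is exactly what the paper does.

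The gap is the first step, and it is the one you flag yourself but then resolve the wrong way. The paper does not assume the per-type threshold $sim(c_h,q_h)\geq\theta$; it derives it by a \emph{first} application of the pigeonhole principle over the $n$ unit-pattern types: since $sim(C_l,Q)=\frac{1}{n}\sum_{h'=1}^{n} sim(c_{h'},q_{h'})\geq\theta$, at least one type $h$ must satisfy $sim(c_h,q_h)\geq\theta$, and the derivation then proceeds for that type. Your resolution --- declaring the per-type threshold to be ``what the retrieval stage implements'' and deferring cross-type compensation to later pruning --- changes the claim from a necessary condition on every community with $sim(C_l,Q)\geq\theta$ into a conditional needing an extra hypothesis, and it leaves open exactly the failure mode you describe: a qualifying community whose type-$h$ contribution falls short of $\theta$ might contribute no type-$h$ candidate, and the later upper-bound and exact-score refinements cannot resurrect a community that was never generated as a candidate. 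The pigeonhole over types is the missing idea; it guarantees the filter fires for at least one type of each qualifying community, which preserves completeness provided candidate communities are assembled as a union over the types (the only sound reading of line 8 of the framework). With that first pigeonhole inserted, the rest of your argument goes through unchanged.
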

\begin{proof} Please refer to the proof in Appendix C.
\end{proof}

\noindent {\bf Pruning Rule During the Unit Pattern Retrieval:} Intuitively, from Theorem \ref{theorem:unit_pattern_retrieval}, we can obtain a similarity threshold (i.e., $\frac{\theta \cdot \ |q_h|}{\max\{|c_h|\}}$) for each type of query unit patterns $q_h$ in the query community $Q$. Given a unit pattern $c_h[j]$, we can compute an upper bound of the summed similarity score $\sum_{i=1}^{|q_h|}cos\_sim(c_h[j], q_h[i])$, and rule out the unit pattern $c_h[j]$ if the score upper bound is less than the threshold $\frac{\theta \cdot |q_h|}{ \max\{|c_h|\}}$, where $\max\{|c_h|\}$ can be offline calculated during the pre-computation phase.

\subsection{Unit Pattern Retrieval Algorithm via Index Traversal}
\label{subsec:index}

\noindent{\bf The aR-tree Index $I$:} As mentioned in lines 2-3 of Algorithm \ref{alg:comans_framework} (framework), we build an aR-tree index \cite{lazaridis2001progressive}, $I$, over the road-network graph $G$, which can facilitate the $Top\text{-}kCS^2$ query answering. Specifically, for each unit pattern, $c_h \in G$, we use a \textit{minimum bounding rectangle} (MBR) to minimally bound $c_h$ (as shown in Figure \ref{subfig:MBR}), and then insert the MBR into the aR-tree. As a result, the aR-tree structure is as follows.

\underline{\it Leaf Nodes:} Each leaf-node in the aR-tree $I$ contains unit patterns. These unit patterns are the smallest units extracted from the road-network graph $G$ (as discussed in Algorithm \ref{alg:uc_detect}). 

As an example in Figure \ref{subfig:Rtree}, brown nodes at the bottom of the aR-tree correspond to leaf nodes.

\underline{\it Non-Leaf Nodes:} Each non-leaf node in the aR-tree $I$ contains several entries of child nodes. Each entry in the non-leaf node stores an MBR which minimally bounds all MBRs (or unit patterns) under this entry, as well as the summary information, $INFO =$ $(MIN, MAX, ARR, VMAX)$ of all children under entry. Here, $MIN$ and $MAX$ refer to the minimum and maximum corner points of the MBR on $x$- and $y$-axes, respectively; $ARR$ is a bit vector of POIs which indicates whether or not an entry contains some POIs; $VMAX$ is a $n \times m$ matrix, which stores the maximum count vector for each unit pattern type, $c_h$ ($1 \leq h \leq n$).

Figure \ref{fig:index} illustrates an example of aR-tree with non-leaf nodes, where MBRs $E_1 \sim E_3$ (green nodes) correspond to non-leaf nodes. For example, in Figure \ref{subfig:Rtree}, MBR $E_3$ stores the summary information $E_3.INFO$ = $(MIN, MAX, ARR, VMAX)$. Here, $MIN$ $= (MIN_X, MIN_Y)$, where $MIN_X$ = $\min \{c_1.MIN_X, c_2.MIN_X, c_3.MIN_X\}$ and $MIN_Y$ = $\min \{c_1.MIN_Y, c_2.MIN_Y, c_3.MIN_Y\}$. Similarly, we can obtain $MAX = (MAX_X, MAX_Y)$, where $MAX_X$ = $\max \{c_1.MAX_X, c_2.MAX_X, c_3.MAX_X\}$ and $MAX_Y$ = $\max \{c_1.MAX_Y, c_2.MAX_Y, c_3.MAX_Y\}$. $ARR$ is a bit vector indicating whether or not a $POI$ is contained in that MBR, e.g., $ARR$ = $\{1,1,1,1\}$. $VMAX$ is the matrix which indicates the maximum POI vector for each structure type, e.g., in Figure \ref{fig:simscore}, $VMAX_2$ = $\{4, 5, 1, 2\}$.

\noindent {\bf Index Traversal for Retrieving Similar Unit Patterns:} In line 6 of our $Top\text{-}kCS^2$ framework (Algorithm \ref{alg:comans_framework}), we need to obtain candidate unit patterns that are similar to a query unit pattern $q_h[i]$, by traversing the aR-tree $I$ via Algorithm \ref{alg:index_traversal}. 

Specifically, Algorithm \ref{alg:index_traversal} retrieves a set of candidate unit patterns for each type of query unit pattern set $q_h$ (line 1). First, we insert the root of the aR-tree, $I$, into heap $H$ (line 2). Next, we traverse the index $I$ to find all the candidate unit patterns (line 3-12). Each time we de-heap a node $M$ from heap $H$ (lines 3-4). When node $M$ is a leaf node, for each unit pattern $c_h[j]$ in $M$, we will check whether $\sum_{i=1}^{|q_h|} cos\_sim(c_h[j], q_h[i]) \geq t$ holds (from Theorem \ref{theorem:unit_pattern_retrieval}). If the answer is yes, then we will add the unit pattern $c_h[j]$ to the candidate set $c_h$ (lines 5-8). When node $M$ is a non-leaf node, for each child $M_l$ in $M$, we compute an upper bound, $\sum_{i=1}^{|q_h|} ub\_sim(M_l.max, q_h[i])$, of the summed similarity score. If $\sum_{i=1}^{|q_h|} ub\_sim(M_l.max, q_h[i]) \geq t$ holds, then we will insert node $M_l$ into heap $H$ for further checking (lines 9-12). The while loop terminates, when heap $H$ becomes empty (line 3). Finally, we return a set, $c_h$, of candidate unit patterns that are similar to $q_h$.

\begin{figure}[t!]
\centering\vspace{-2ex} 
\subfigure[][{\small MBRs}]{
\scalebox{0.3}[0.3]{\includegraphics{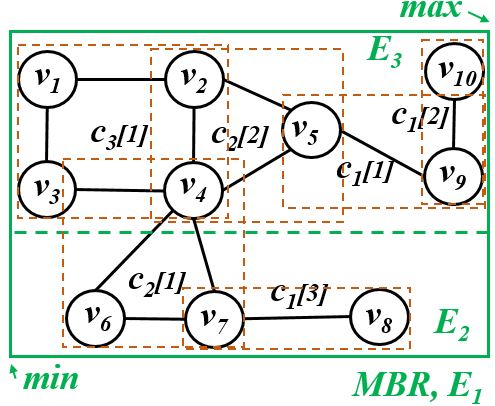}
}
\label{subfig:MBR}
}
\subfigure[][{\small tree structure}]{\hspace{-1ex}
\scalebox{0.3}[0.3]{\includegraphics{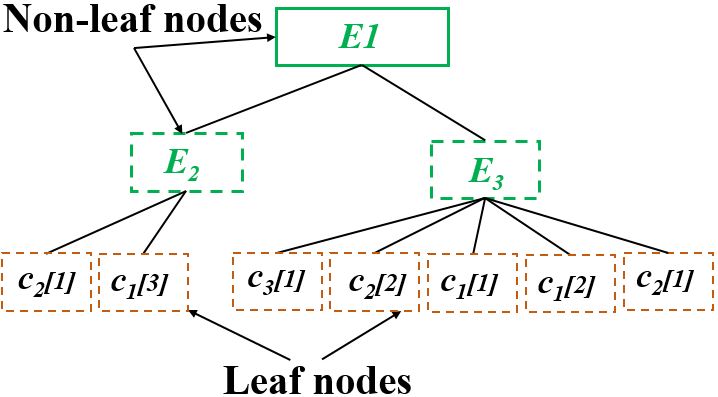}}
\label{subfig:Rtree}
}
\caption{\small An example of the aR-tree index.}
\label{fig:index}
\end{figure}

\begin{algorithm}[ht!]
\KwIn{an aggregate R-tree $I$, query unit pattern sets $q_h$, and a similarity threshold $t$ = $\frac{\theta \cdot \ |q_h|}{\max\{|c_h|\}}$}
\KwOut{a set of candidate unit patterns, $c_h$, similar to unit pattern set $q_h$}
\For{each type of query unit pattern set $q_h$}{
    insert the root of index $I$ into heap $H$\\
    \While{$H$ is not empty}{
        pop out a node $M$ from heap $H$\\
        \eIf{$M$ is leaf node}{
            \For{each unit pattern $c_h[j]$ in $M$}{
                \If{$\sum_{i=1}^{|q_h|} cos\_sim(c_h[j], q_h[i]) \geq t$}{
                    add $c_h[j]$ to the candidate set $c_h$ \tcp{Theorem \ref{theorem:unit_pattern_retrieval}}
                }
            }
        }{
            \For{each child node, $M_l$, in $M$}{
                \If{$\sum_{i=1}^{|q_h|} ub\_sim(M_l.max, q_h[i]) \geq t$}{
                    insert child node $M_l$ into heap $H$ \tcp{Theorem \ref{theorem:unit_pattern_retrieval}}
                }
            }
        }
    }
}
return $c_h$ = $\{c_1, c_2, \cdots, c_z\}$
\caption{Unit Pattern Retrieval via Index Traversal}
\label{alg:index_traversal}
\end{algorithm}

\section{Continuous $Top\text{-}kCS^2$ Processing}
\label{sec:ctopksol}

In this section, we discuss how to tackle the continuous $Top\text{-}kCS^2$ problem (i.e., $CTop\text{-}kCS^2$ as defined in Definition \ref{def:ctopkCS2}), where the query community $Q_i$ continuously moves along line segment $L$.

\subsection{Finding Splitting Points}

Different from $Top\text{-}kCS^2$, in the $CTop\text{-}kCS^2$ problem, the center point, $q_i$, of the query community $Q_i$ can be located anywhere on line segment $L$. It is not efficient to enumerate all possible query communities $Q_i$ with respect to an infinite number of center points $q_i$ on $L$. Thus, we propose to split the query line segment, $L$, into multiple intervals, such that query unit patterns for any center position $q_i$ within each interval remain the same.

\begin{figure}[ht!]
\centering
\scalebox{0.45}[0.45]{\includegraphics{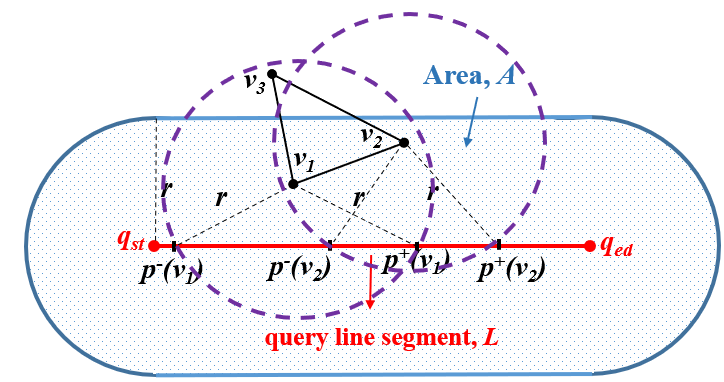}}
\caption{\small Illustration of split points in the query line segment, $L =q_{st}q_{ed}$.}
\label{fig:cont_comm}
\end{figure}

Below, we discuss how to find a list, $S$, of $n$ splitting points, $s_1$, $s_2$, $...$, and $s_n$, on query line segment, $L = q_{st} q_{ed}$, such that when the center point $q_i$ passes by each splitting point, the query community $Q_i$ (with a set of unit patterns) will change.

\begin{algorithm}[ht!]
\KwIn{a query line segment $L = q_{st} q_{ed}$, a graph $G$, and a radius $r$}
\KwOut{a list, $S$, of splitting points on $L$}
find all unit patterns intersecting with a circle of radius $r$ whose center moves from $q_{st}$ to $q_{ed}$ on $L$\\
add the intersecting unit patterns to a list, $unit\_list$\\
\For{each unit pattern, $c_h$, in $unit\_list$}{
    $interval\_list = \emptyset$\\
    \For{each vertex $v_i \in c_h$}{
        draw a circle $\odot v_i$ centered at $v_i$ and with radius $r$\\
        obtain two intersection points $[p^-(v_i), p^+(v_i)]$ between $\odot v_i$ and $L$\\ \tcp{if intersection points are out of boundary points of $L$, replace them with $q_{st}$ or $q_{ed}$}
        add interval $[p^-(v_i), p^+(v_i)]$ to a list $interval\_list$
    }
    union all intervals in $interval\_list$ and add boundary points of the resulting intervals (associated with $c_h$) to $S$
}
    
return $S = \{s_1, s_2, \cdots, s_n\}$
\caption{Finding splitting points on a query line segment, $L$}
\label{alg:finding_split}
\end{algorithm}

Algorithm \ref{alg:finding_split} presents the process of finding splitting points on the query line segment, $L = q_{st}q_{ed}$. We first obtain all unit patterns that intersect with a circle with radius $r$, whose center moves from $q_{st}$ to $q_{ed}$ along $L$ (line 1). Next, we add all these intersecting unit patterns to a list, $unit\_list$  (line 2). For each unit pattern $c_h$ in $unit\_list$, we draw a circle $\odot v_i$ centered at $v_i$ and with radius $r$ for each vertex $v_i\in c_h$ (lines 3-6). Then, we obtain intervals $[p^-(v_i), p^+(v_i)]$ whose bounds are intersection points between circles $\odot v_i$ and line segment $L$ (line 7). We add intervals $[p^-(v_i), p^+(v_i)]$ to a list $interval\_list$ (line 8). Next, we union all intervals in $interval\_list$ and add boundary points of union intervals to $S$ (line 9). Finally, we return all splitting points $\{s_1, s_2, \cdots, s_n\}$ in $S$.

As an example in Figure \ref{fig:cont_comm}, a query community $Q_i$ (i.e., a circle centered on $L$ and with radius $r$) moves along a query line segment $L = q_{st}q_{ed}$, which forms a shaded area $A$. Consider a unit pattern of triangular shape $\Delta v_1 v_2 v_3$. We draw circles centered at $v_1$ and $v_2$ with radius $r$, and obtain intersection points on $L$, that is, $[p^-(v_1), p^+(v_1)]$ and $[p^-(v_2), p^+(v_2)]$. As shown in the figure, we union these two intervals and obtain an interval $[p^-(v_1), p^+(v_2)]$. Here, $p^-(v_1)$ and $p^+(v_2)$ are splitting points on $L$, and we add them to $S$.

\subsection{Finding Query Communities}
\label{subsec:find_query_communities}

Intuitively, the splitting points on $L$ indicate the changes of unit patterns in query communities $Q_i$. After obtaining all splitting points in $S$ (as shown in Algorithm \ref{alg:finding_split}), we can incrementally maintain a set of unit patterns for $Q_i$. Specifically, we first obtain a set, $U_0$, of unit patterns for the center of $Q_1$ in interval $[q_{st}, s_1]$. Then, for each next splitting point $s_i$ ($1\leq i\leq n$), we maintain the set, $U_i$, of unit patterns for interval $[s_i, s_{i+1}]$. That is, if the splitting point $s_i$ is a lower bound of $[p^-(v_i), p^+(v_i)]$ (see line 8 of Algorithm \ref{alg:finding_split}), we obtain $U_i = U_{i-1} \cup \{c_h\}$; similarly, if $s_i$ is an upper bound of $[p^-(v_i), p^+(v_i)]$, we have $U_i = U_{i-1} - \{c_h\}$, where $v_i$ is a vertex in unit pattern $c_h$. This way, we can incrementally obtain a set $U_i$ of unit patterns for each query community $Q_i$ with interval $[s_i, s_{i+1}]$ on $L$.

\subsection{$CTop\text{-}kCS^2$ Query Processing}

In this subsection, we present the algorithm for $CTop\text{-}kCS^2$ query answering in Algorithm \ref{alg:conti_topk}. Specifically, we first find a set, $S$, of splitting points on $L$ (as mentioned in Algorithm \ref{alg:finding_split}) (line 1), which will result in intervals $[s_i, s_{i+1}]$. Then, as discussed in Section \ref{subsec:find_query_communities}, we incrementally obtain a set, $U_i$, of unit patterns for each interval $[s_i, s_{i+1}]$ (line 2). We will then apply our proposed $Top\text{-}kCS^2$ algorithm (lines 4-8 in Algorithm \ref{alg:comans_framework}), which traverses the index to retrieve candidate unit patterns for each query unit pattern in $\cup_{\forall i} U_i$ in a batch manner (line 3). Finally, we assemble candidate unit patterns for each query community $Q_i$ (w.r.t., interval $[s_i, s_{i+1}]$ on $L$), and obtain/return top-$k$ communities in $list\_topk_i$ (as given in lines 9-25 of Algorithm \ref{alg:comans_framework}) (lines 4-6).

\begin{algorithm}[ht!]
\KwIn{a query line segment $L = q_{st} q_{ed}$, a graph $G$, and a radius $r$}
\KwOut{top-$k$ communities for each query community $Q_i$ on $L$}
find a set, $S$, of splitting points on $L$ \qquad \tcp{Algorithm \ref{alg:finding_split}}
obtain a set, $U_i$, of unit patterns in query communities $Q_i$ with centers in intervals w.r.t. splitting points in $S$ \tcp{Section \ref{subsec:find_query_communities}}
find candidate unit patterns for each query unit pattern in $\cup_{\forall i} U_i$ \qquad \tcp{lines 4-8 in Algorithm \ref{alg:comans_framework}}
\For{each query community $Q_i$ w.r.t. an interval on $L$}{
    find top-$k$ similar communities, $list\_topk_i$ \qquad \tcp{lines 9-25 in Algorithm \ref{alg:comans_framework}}
}
return $list\_topk_i$
\caption{$CTop\text{-}kCS^2$ query answering}
\label{alg:conti_topk}
\end{algorithm}

\section{Experimental Evaluation}
\label{sec:experiment}

In this section, we verify the effectiveness and efficiency of our proposed $Top\text{-}kCS^2$ and $CTop\text{-}kCS^2$ algorithms over both real and synthetic road-network graphs. 

\subsection{Experimental Settings}

\noindent \textbf{Real/synthetic data sets.} We used both real and synthetic data sets for our experimental evaluation. Specifically, for real  data set, we use the California Road Network \cite{ca_data}, denoted as $CA$, which contains 21,048 road intersection points, 21,693 road segments, and 104,770 \textit{points of interests} (POIs). $CA$ is originally obtained from Digital Chart of the World Server and U.S. Geological Survey. Each vertex in $CA$ data set is represented by (longitude, latitude).

For synthetic data, we first generate vertices of a road-network graph on a spatial data space, following either the Uniform or Clustered distribution. For the Uniform distribution, we generate vertices uniformly in a designated spatial data space; for the clustered data set, we first randomly obtain seed vertices in a spatial space, and then generate other vertices close to these seeds. Here, the clustered data set can simulate dense road networks (i.e., clusters of vertices) in cities. Next, we connect vertices via edges (road segments) on road networks, that is, linking each vertex to $d \in [deg_{min}, deg_{max}]$ random nearest neighbors nearby (avoiding road intersections on the planar graph). This way, we can obtain a random road-network graph, $G$, with an average degree $deg$. By using different spatial distributions of vertices, we produce two types of graph, $uniform$ and $cluster$. 

\noindent \textbf{Measures.} To evaluate the query performance, we selected 15 random center points $v_q$ from road networks to generate query communities for $Top\text{-}kCS^2$ (or random line segments of length $|L|$ for $CTop\text{-}kCS^2$). We report the \textit{pruning power}, \textit{wall clock time}, and \textit{I/O cost}. Here, the \textit{pruning power} is the percentage of candidate communities that are pruned by our pruning strategies; the \textit{wall clock time} is the average time cost to answer $Top\text{-}kCS^2$ queries; the \textit{I/O cost} is the number of node accesses in the aR-tree. 

\noindent \textbf{Competitor.} To our best knowledge, no prior works studied the top-$k$ community search problem in large-scale road-network graphs, which has different community semantics from that on social networks. Thus, in this paper, we compare our $Top\text{-}kCS^2$ approach with a baseline algorithm, named $baseline$, which is a naive approach without using any index. In particular, the $baseline$ method first scans the road-network graph $G$ to retrieve unit patterns from $G$ that are similar to query unit patterns in the query community $Q$, and then computes $k$ communities (containing the retrieved unit patterns) that satisfy the similarity threshold, $\theta$ and are closest to $Q$.

For our $CTop\text{-}kCS^2$ query, we use a naive baseline algorithm, \textit{baseline}, which first retrieves all query communities on $L$ and then obtains top-$k$ communities for each query community without using our proposed index and pruning strategies.

\noindent \textbf{Parameter settings.} Table \ref{table:parameter_conti} depicts the parameter settings, where default values are in bold. Each time we vary the values of one parameter, while other parameters are set to their default values. We ran all the experiments on a machine with Intel Core i7-6600U 2.60GHz CPU, Windows 10 OS, and 512 GB memory. All algorithms were implemented in C++.

\begin{table}[ht!]
\scriptsize
\caption{\small The parameter settings.}
\begin{center}
 \begin{tabu}{| c | l |}
 \hline
 {\bf Parameters} & \qquad\qquad {\bf Values}  \\
 \hline\hline
  $k$ & 1, 5, \textbf{10}, 15, 20 \\
 \hline
  $deg$ & 2, \textbf{3}, 4\\
\hline 
  $r$  & 0.1, 0.5, \textbf{1}, 1.5, 2\\
 \hline 
  $\theta$ &  0.5, 0.55, \textbf{0.6}, 0.65, 0.7\\
 \hline
  $|V(G)|$ & 10K, 20K, \textbf{30K}, 50K, 100K\\
\hline
  $|L|$ & 2, \textbf{4}, 6\\
 \hline
\end{tabu}
\label{table:parameter_conti}
\end{center}
\end{table}

\begin{figure}[ht!] 
\subfigure[][{\small pruning power vs. $k$}]{
\scalebox{0.2}[0.2]{\includegraphics{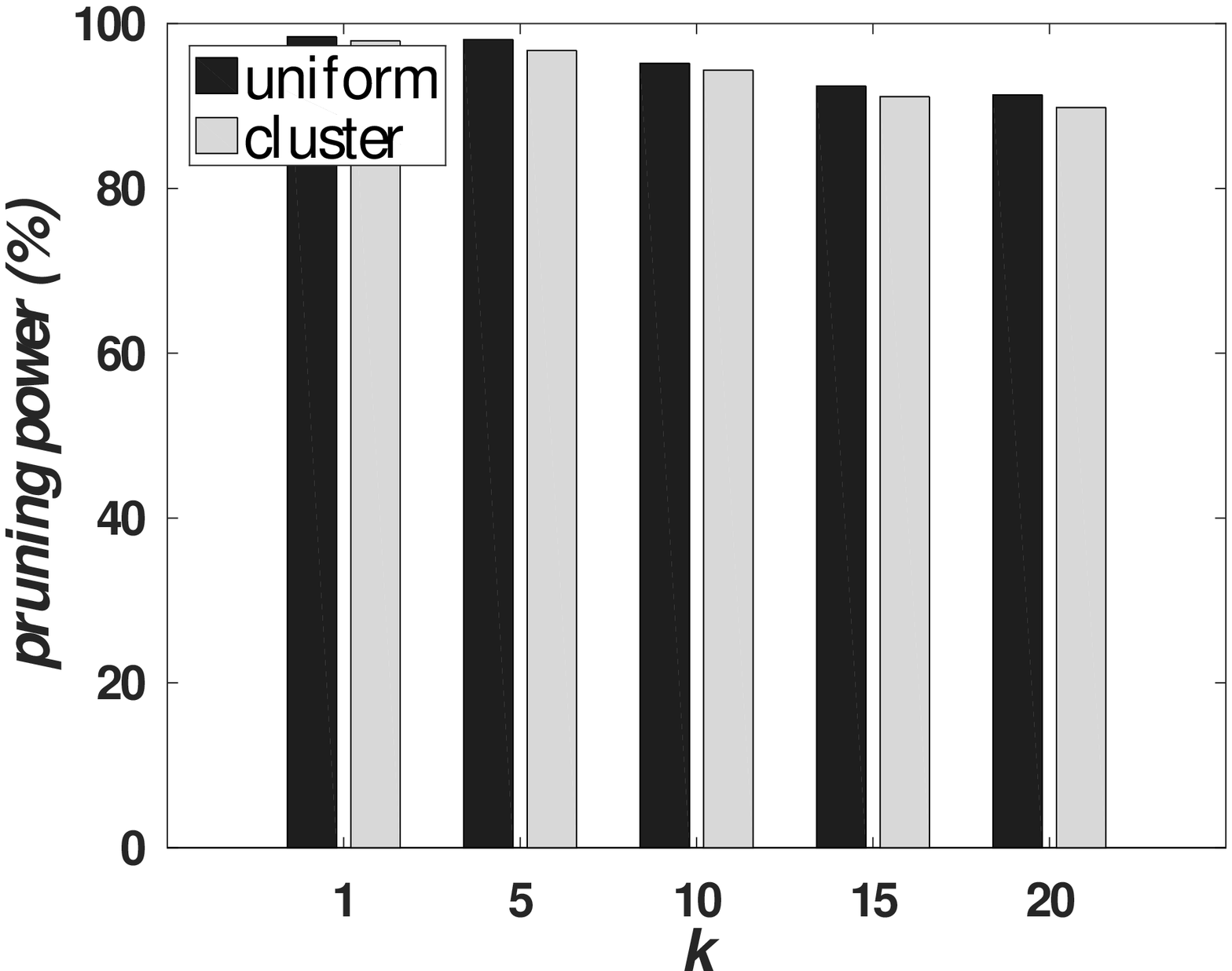}}\label{subfig:k_pp}
}
\subfigure[][{\small pruning power vs. $|V(G)|$}]{
\scalebox{0.2}[0.2]{\includegraphics{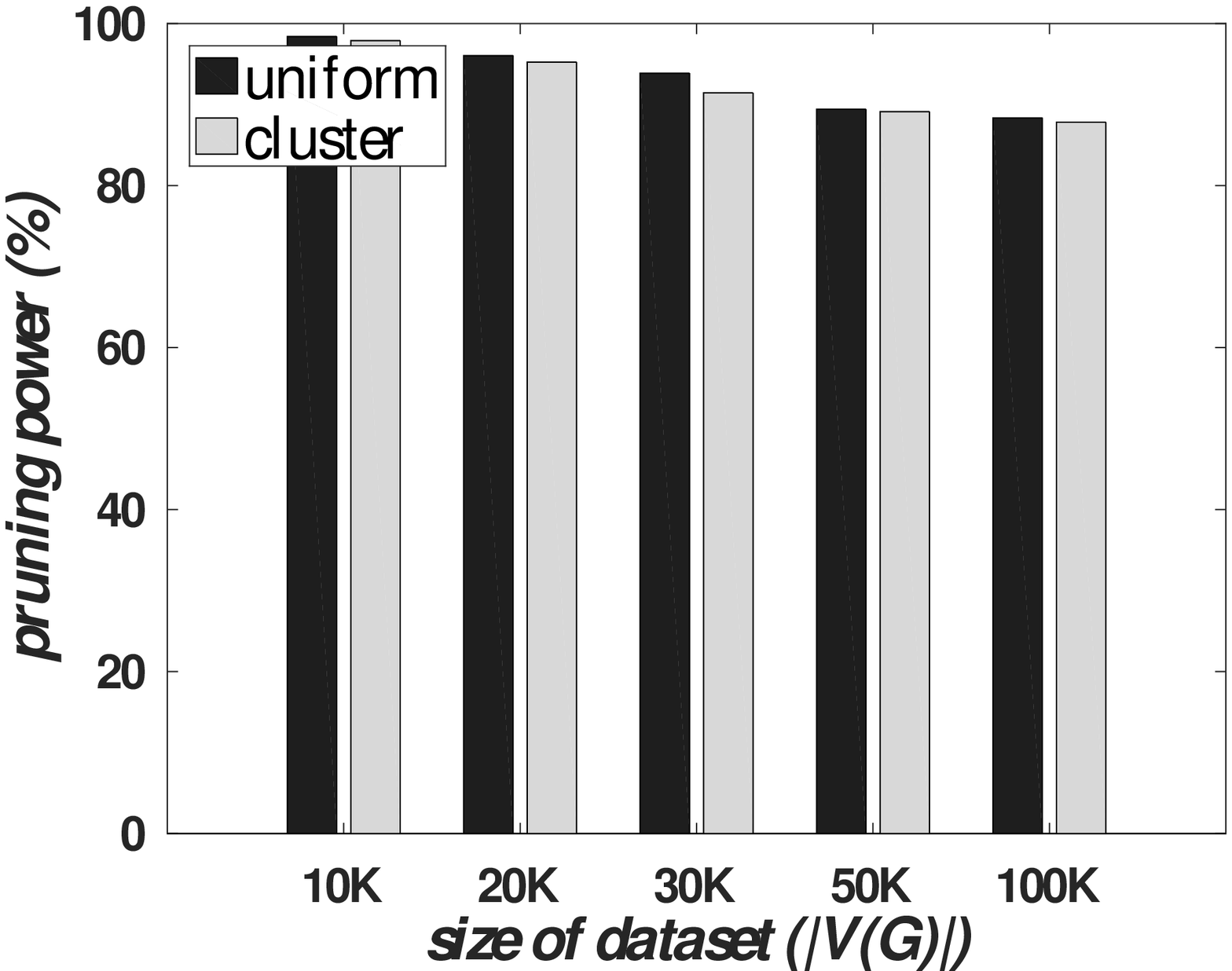}}\label{subfig:data_pp}
}\\
\subfigure[][{\small pruning power vs. $deg$}]{
\scalebox{0.2}[0.2]{\includegraphics{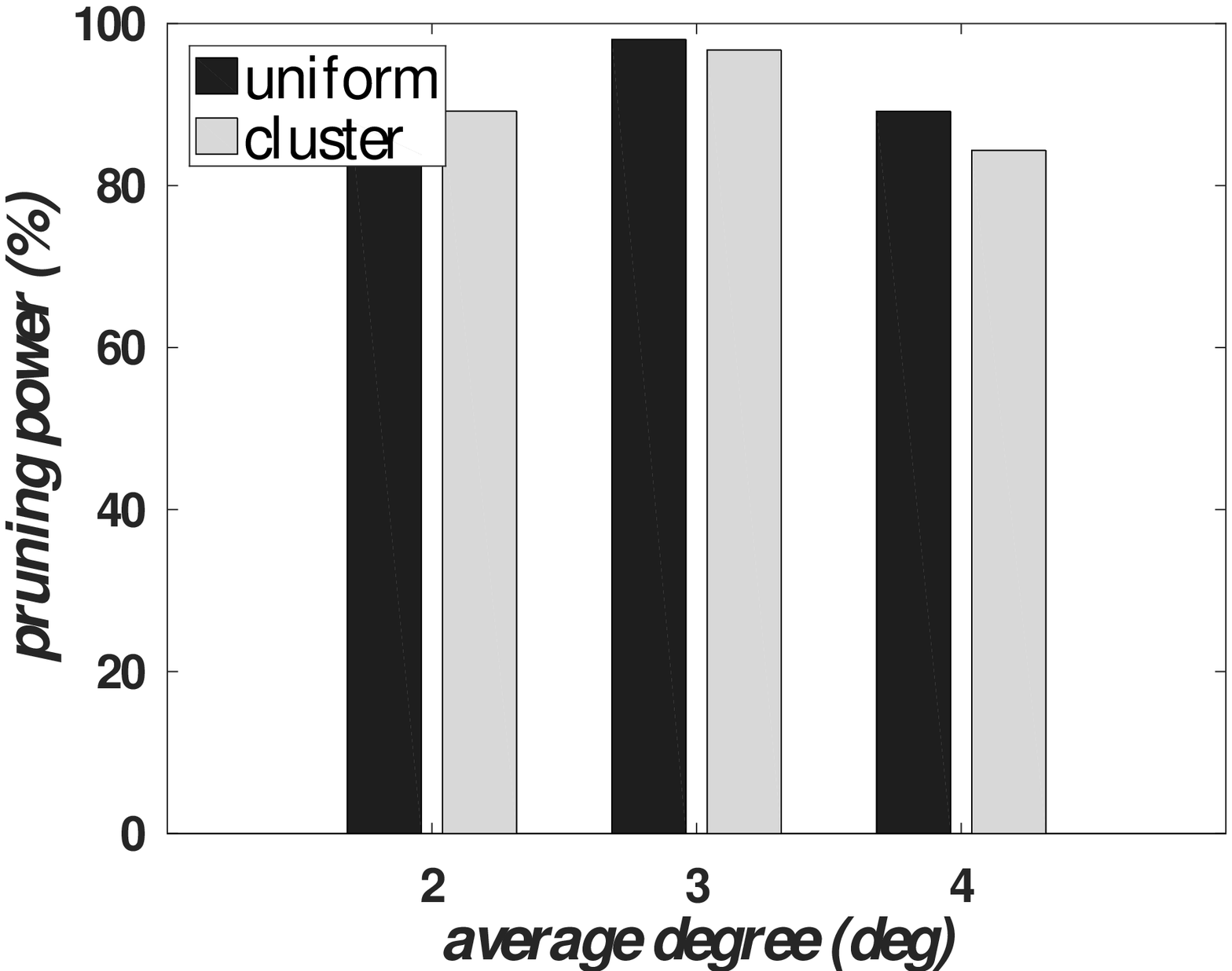}}\label{subfig:deg_pp}
}
\subfigure[][{\small pruning power vs. $r$}]{
\scalebox{0.2}[0.2]{\includegraphics{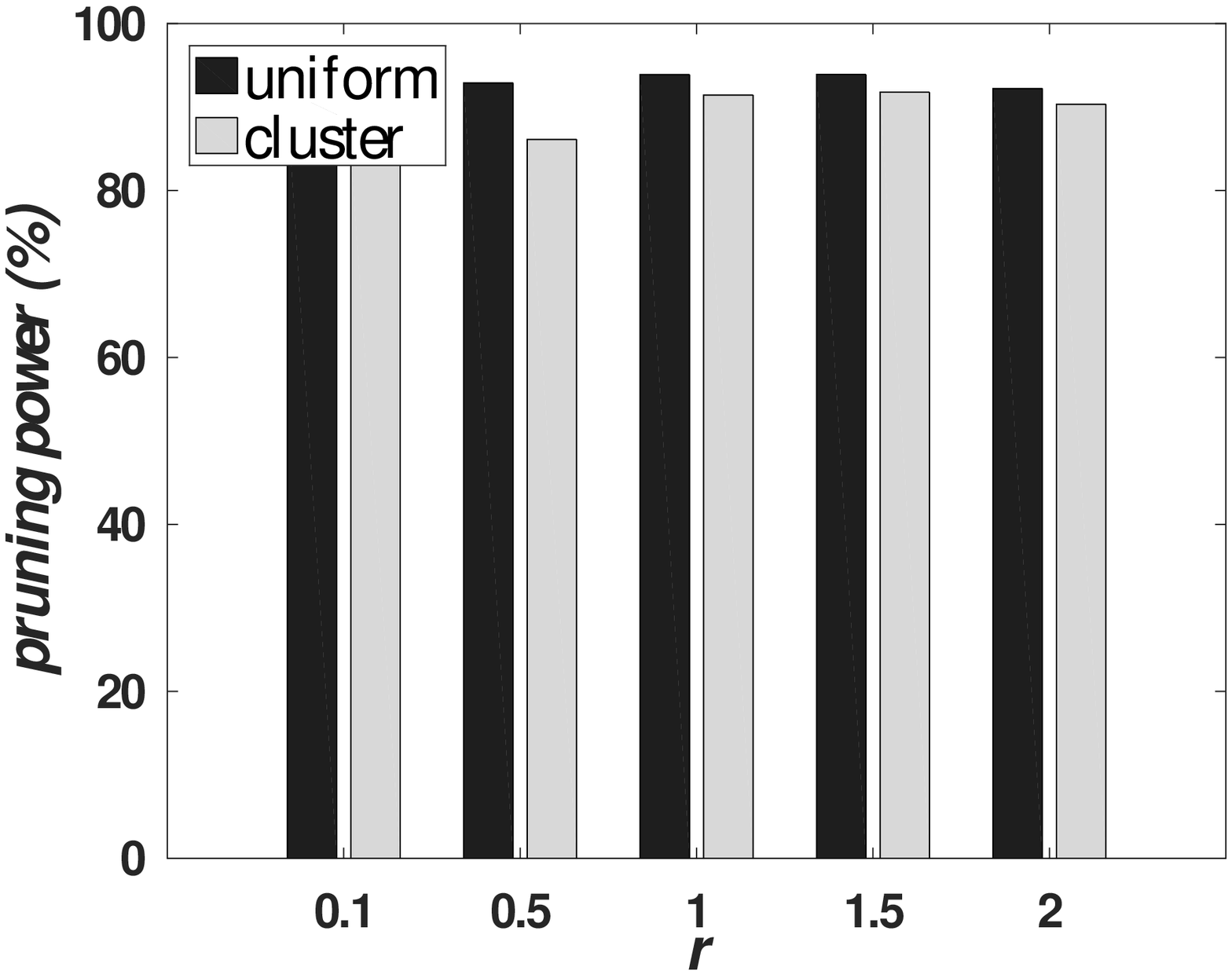}}\label{subfig:rad_pp}
}\\
\centering
\subfigure[][{\small pruning power vs. $\theta$}]{
\scalebox{0.2}[0.2]{\includegraphics{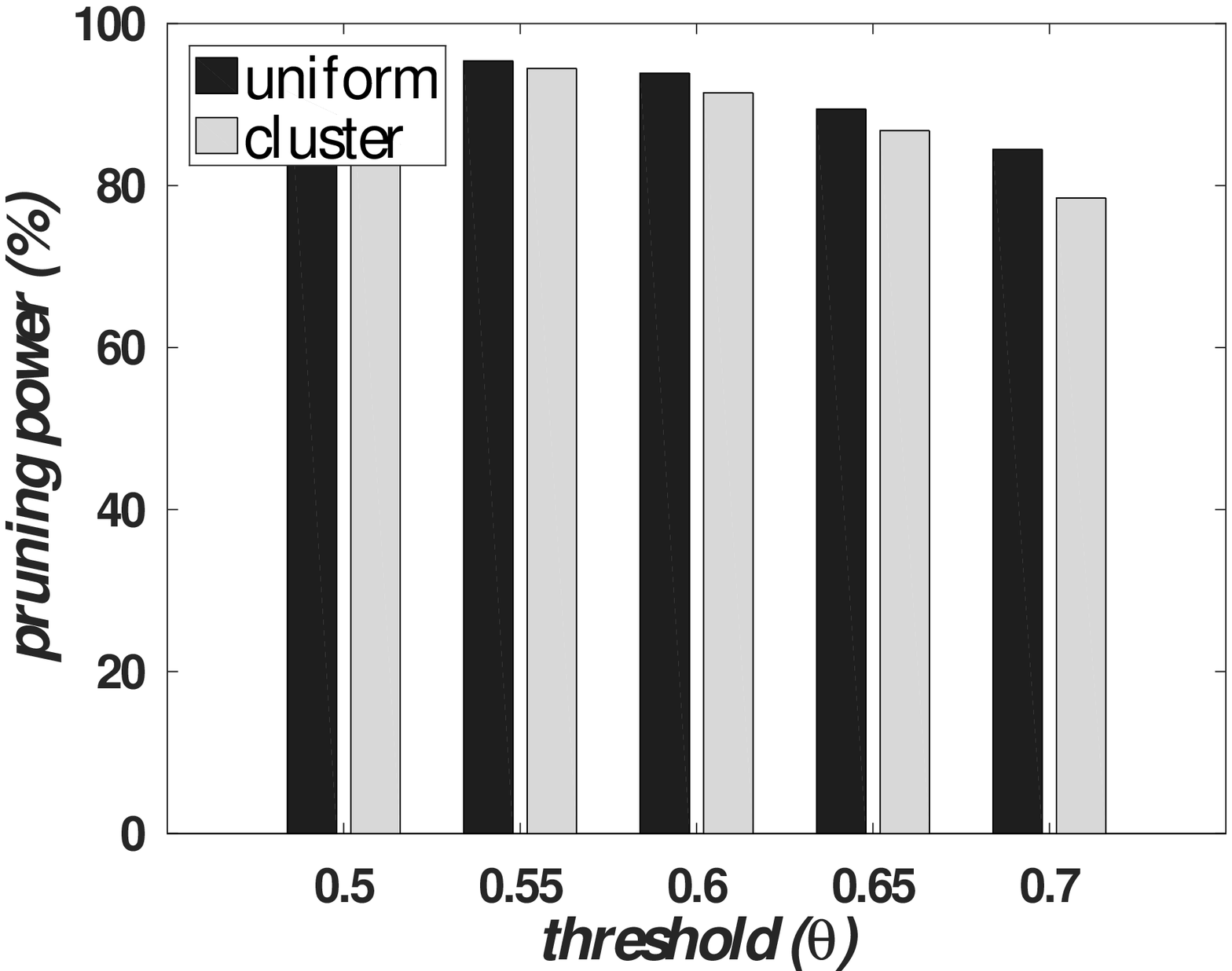}}\label{subfig:theta_pp}
}\\
\caption{\small The $Top\text{-}kCS^2$ pruning power vs. parameters.}
\label{fig:pruning_power}
\end{figure}

\subsection{Evaluation of the $Top\text{-}kCS^2$ Pruning Power}

First, we report the effectiveness of our proposed pruning strategies (i.e., score upper bound and distance pruning as discussed in Section \ref{sec:prune}) over $uniform$ and $cluster$ data sets in Figure \ref{fig:pruning_power}, in terms of the \textit{pruning power}. Specifically, Figure \ref{subfig:k_pp} varies parameter $k$ from 1 to 20 (other parameters are set to default values). For larger $k$, the pruning power slightly decreases, but remains high (i.e., $85\% \sim 94\%$). Figure \ref{subfig:deg_pp} shows the pruning power of our pruning methods with different average degrees $deg$ from 2 to 4, where the pruning power remains high (i.e., $83\% \sim 89\%$) for different $deg$ values. Figure \ref{subfig:rad_pp} reports the pruning power of our pruning methods, by varying the radius $r$ from 0.1 to 2. From the figure, the pruning power is not very sensitive to radius $r$, which is high (i.e., $85\% \sim 93\%$). Figure \ref{subfig:theta_pp} evaluates the pruning power by varying similarity threshold, $\theta$, from 0.5 to 0.7, which gradually decreases for larger $\theta$. Nevertheless, for different $\theta$ values, the pruning power remains high (i.e., $78\% \sim 95\%$). Figure \ref{subfig:data_pp} tests the scalability of our pruning methods for different graph sizes, $|V(G)|$, from $10K$ to $100K$. The pruning power slightly decreases for larger graph, but remains high (i.e., $82\% \sim 94\%$), which confirms the effectiveness of our pruning methods for $Top\text{-}kCS^2$ queries.

\begin{figure}[t!]
\centering 
\scalebox{0.16}[0.16]{\includegraphics{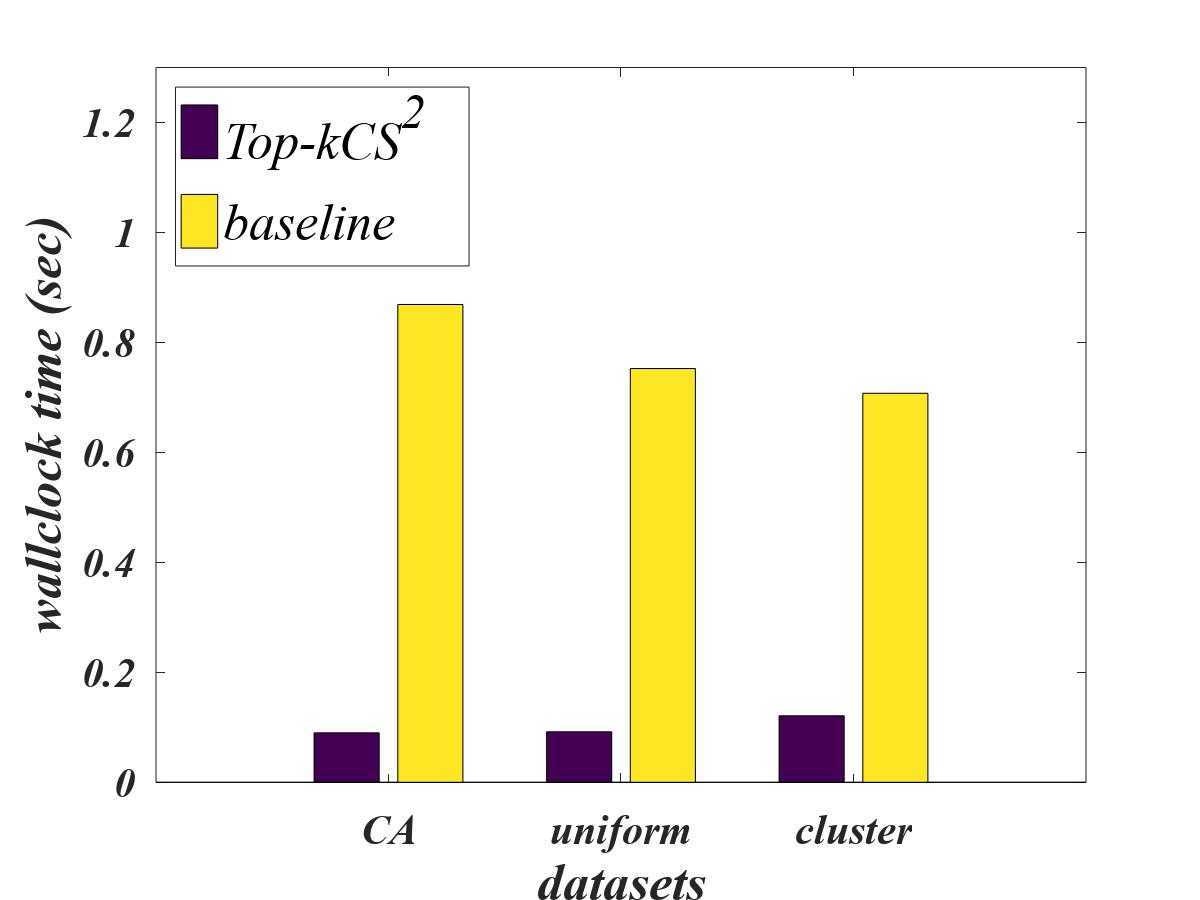}}
\caption{\small The $Top\text{-}kCS^2$ performance vs. real/synthetic data sets.}
\label{fig:datasets}
\end{figure}

\subsection{Evaluation of the $Top\text{-}kCS^2$ Query Efficiency}

\noindent {\bf The $Top\text{-}kCS^2$ performance vs. real/synthetic data sets.} Figure \ref{fig:datasets} compares our $Top\text{-}kCS^2$ approach with the $baseline$ algorithm over real/synthetic data sets, in terms of the wall clock time. From the figure, we can see that the efficiency of the $Top\text{-}kCS^2$ query outperforms that of $baseline$ for all the three data sets. This is because $Top\text{-}kCS^2$ applies effective pruning methods with the help of the index. The experimental results confirm the effectiveness of our pruning methods, and efficiency of our $Top\text{-}kCS^2$ approach. 

Next, we will test the robustness of our $Top\text{-}kCS^2$ approach over synthetic data sets, by varying different parameters (e.g., $k$, $deg$, $r$, $\theta$, and $|V(G)|$). 

\noindent \textbf{The $Top\text{-}kCS^2$ performance vs. parameter $k$.} Figure \ref{fig:k} illustrates the effect of parameter $k$ on the $Top\text{-}kCS^2$ query performance, where $k = 1, 5, 10, 15, 20$, and other parameters are set to default values. When we increase $k$, the wall clock time smoothly increases, whereas the I/O cost also slightly increases. Nonetheless, both wall clock time and I/O cost remain low (i.e., $0.13 \sim 0.3$ $sec$ and $201 \sim 308$ I/Os, respectively), which indicates the query efficiency of our proposed $Top\text{-}kCS^2$ approach with different $k$ values.

\noindent \textbf{The $Top\text{-}kCS^2$ performance vs. average degree $deg$.} Figure \ref{fig:avg_deg} varies the average degree of each vertex in the graph $G$ from 2 to 4, where other parameters are set to default values. When the average degree, $deg$, is small, there are many similar unit patterns (with simple structures), which leads to high processing cost. On the other hand, when the average degree, $deg$, becomes larger, more complex unit patterns (e.g., rectangles and triangles) need to be processed, which requires higher cost. Therefore, for larger average degree, $deg$, as shown in Figure \ref{subfig:time_degree}, the wall clock time first decreases, and then increases. Regarding the I/O cost, since larger $deg$ indicates that we need to access more vertices/edges to search for candidate patterns, the I/O cost increases for larger $deg$ (as shown in Figure \ref{subfig:IO_degree}). Nevertheless, the overall wall clock time and I/O cost remain low (i.e., less than $0.3$ $sec$ and less than 550 I/Os, respectively).

\noindent \textbf{The $Top\text{-}kCS^2$ performance vs. radius $r$.} Figure \ref{fig:radius} shows the wall clock time and I/O cost of our $Top\text{-}kCS^2$ approach over $uniform$ and $cluster$ data sets, by varying the radius $r$ from 0.1 to 2, where default values are used for other parameters (as depicted in Table \ref{table:parameter_conti}). The experimental results show that the wall clock time and I/O cost are not very sensitive to radius $r$. Moreover, for different radius $r$, the wall clock time and I/O cost remain low (i.e., $0.15 \sim 0.25$ $sec$ and $228 \sim 301$ I/Os, respectively).

\noindent \textbf{The $Top\text{-}kCS^2$ performance vs. similarity threshold, $\theta$.} Figure \ref{fig:theta} evaluates the performance of our $Top\text{-}kCS^2$ algorithm for different similarity thresholds, $\theta$, where all other parameters are set to their default values. With the increase of threshold $\theta$, both wall clock time and I/O cost smoothly increase. This is because, a larger $\theta$ value will lead to a larger distance threshold (i.e., $k$-th largest distance from candidate community to $v_q$ in $Q$), which results in more candidates with higher processing cost. Nonetheless, for different $\theta$ values, both wall clock time and I/O cost remain low (i.e., $0.09\sim 0.29$ $sec$ and $189 \sim 341$ I/Os, respectively).

\begin{figure}[t!]
\centering 
\subfigure[][{\small wall clock time}]{
\scalebox{0.23}[0.23]{\includegraphics{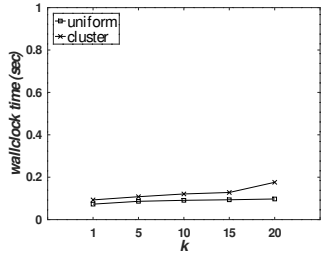}}
}
\subfigure[][{\small I/O cost}]{
\scalebox{0.23}[0.23]{\includegraphics{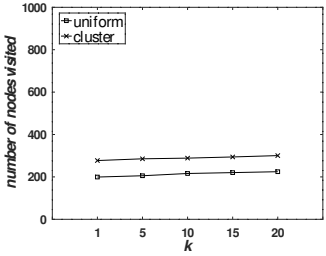}}
}
\caption{\small The $Top\text{-}kCS^2$ performance vs. parameter $k$.}
\label{fig:k}
\end{figure}

\begin{figure}[t!]
\centering 
\subfigure[][{\small wall clock time}]{
\scalebox{0.23}[0.23]{\includegraphics{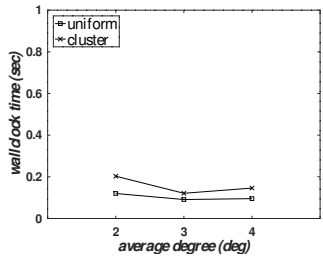}}\label{subfig:time_degree}
}
\subfigure[][{\small I/O cost}]{
\scalebox{0.23}[0.23]{\includegraphics{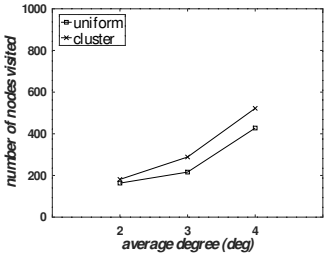}}\label{subfig:IO_degree}
}
\caption{\small The $Top\text{-}kCS^2$ performance vs. average degree $deg$.}\label{fig:avg_deg}
\end{figure}

\noindent \textbf{The $Top\text{-}kCS^2$ performance vs. graph size $|V(G)|$.} Figure \ref{fig:dataset} evaluates the scalability of our $Top\text{-}kCS^2$ algorithm by varying the number, $|V(G)|$, of the vertices from $10K$ to $100K$, where other parameters are set to their default values. Note that, the default value for the average degree of each vertex is 3, thus, there are about $300K$ edges for $100K$ vertices. Also, each edge has multiple POIs, making the number of POIs up to millions. From figures, when the graph size, $|V(G)|$, becomes larger, both the wall clock time and I/O cost of our $Top\text{-}kCS^2$ approach slightly increase. This is reasonable, since larger data sets lead to more candidate unit patterns (and candidate communities) to process and refine, which requires higher CPU time and node accesses in the aR-tree. Nonetheless, the wall clock time and I/O cost remain low (i.e., $0.1 \sim 0.4$ $sec$, and $83\sim 743$ I/Os, respectively), which shows good scalability of our $Top\text{-}kCS^2$ approach for different graph sizes $|V(G)|$.

\begin{figure}[t!]

\subfigure[][{\small wall clock time}]{
\scalebox{0.23}[0.23]{\includegraphics{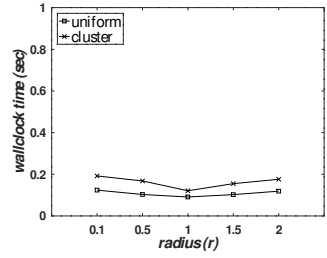}}
}
\subfigure[][{\small I/O cost}]{
\scalebox{0.23}[0.23]{\includegraphics{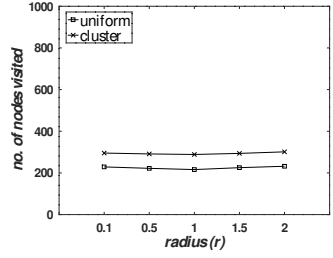}}
}
\caption{\small The $Top\text{-}kCS^2$ performance vs. radius $r$.}
\label{fig:radius}
\end{figure}

\begin{figure}[t!]
\centering 
\subfigure[][{\small wall clock time}]{
\scalebox{0.23}[0.23]{\includegraphics{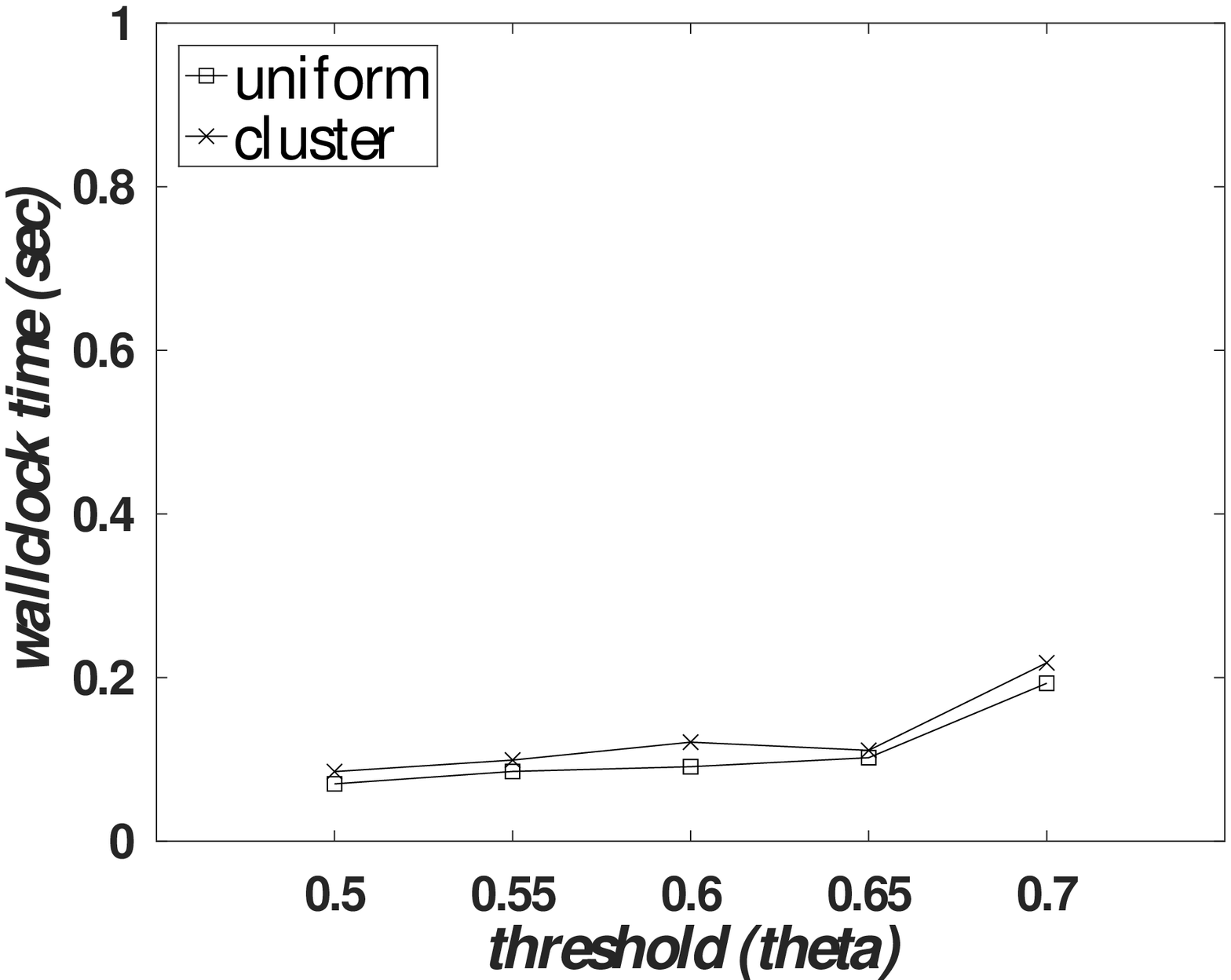}}
}
\subfigure[][{\small I/O cost}]{
\scalebox{0.23}[0.23]{\includegraphics{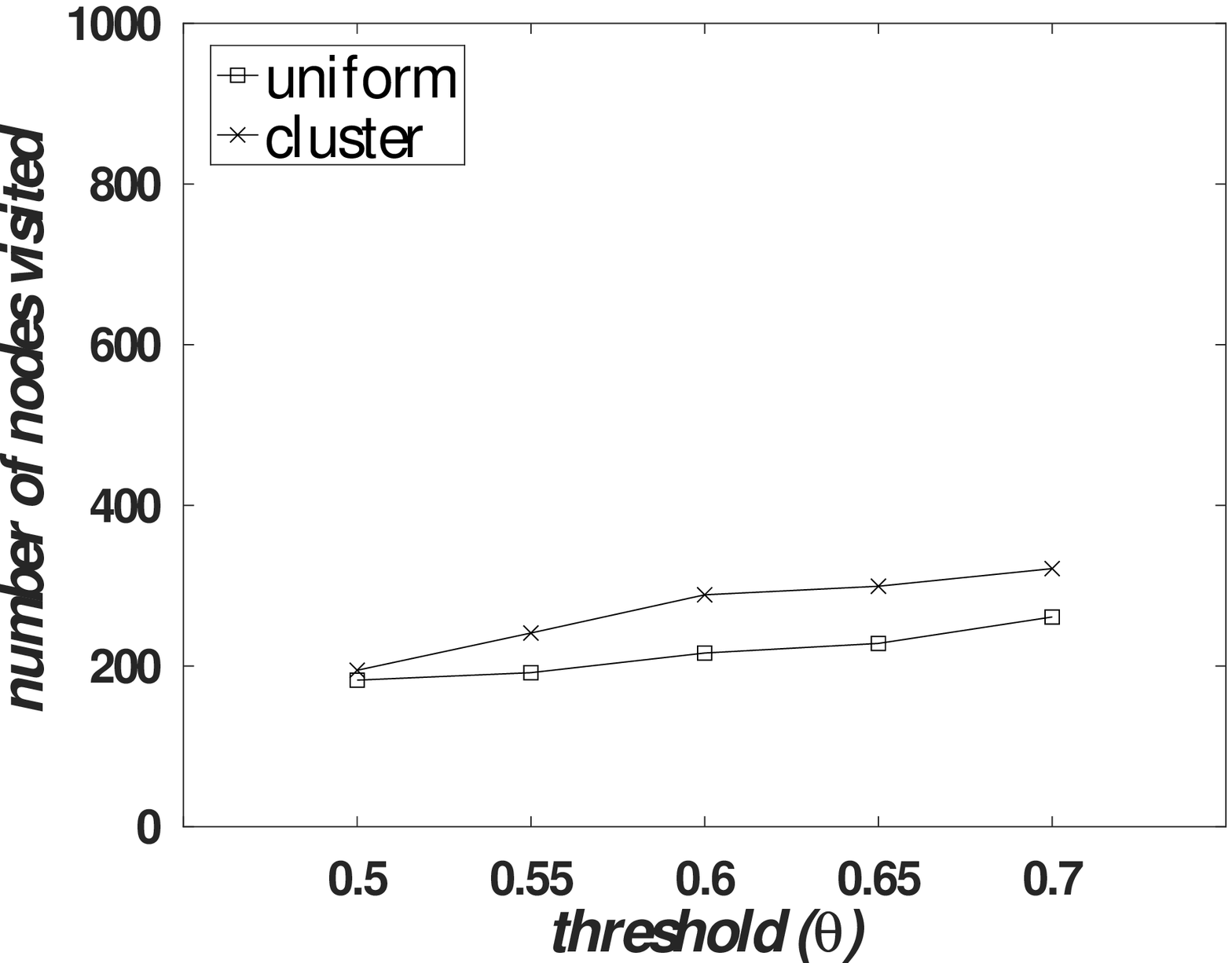}}
}
\caption{\small The $Top\text{-}kCS^2$ performance vs. similarity threshold $\theta$.}
\label{fig:theta}
\end{figure}

\begin{figure}[t!]
\centering 
\subfigure[][{\small wall clock time}]{
\scalebox{0.23}[0.23]{\includegraphics{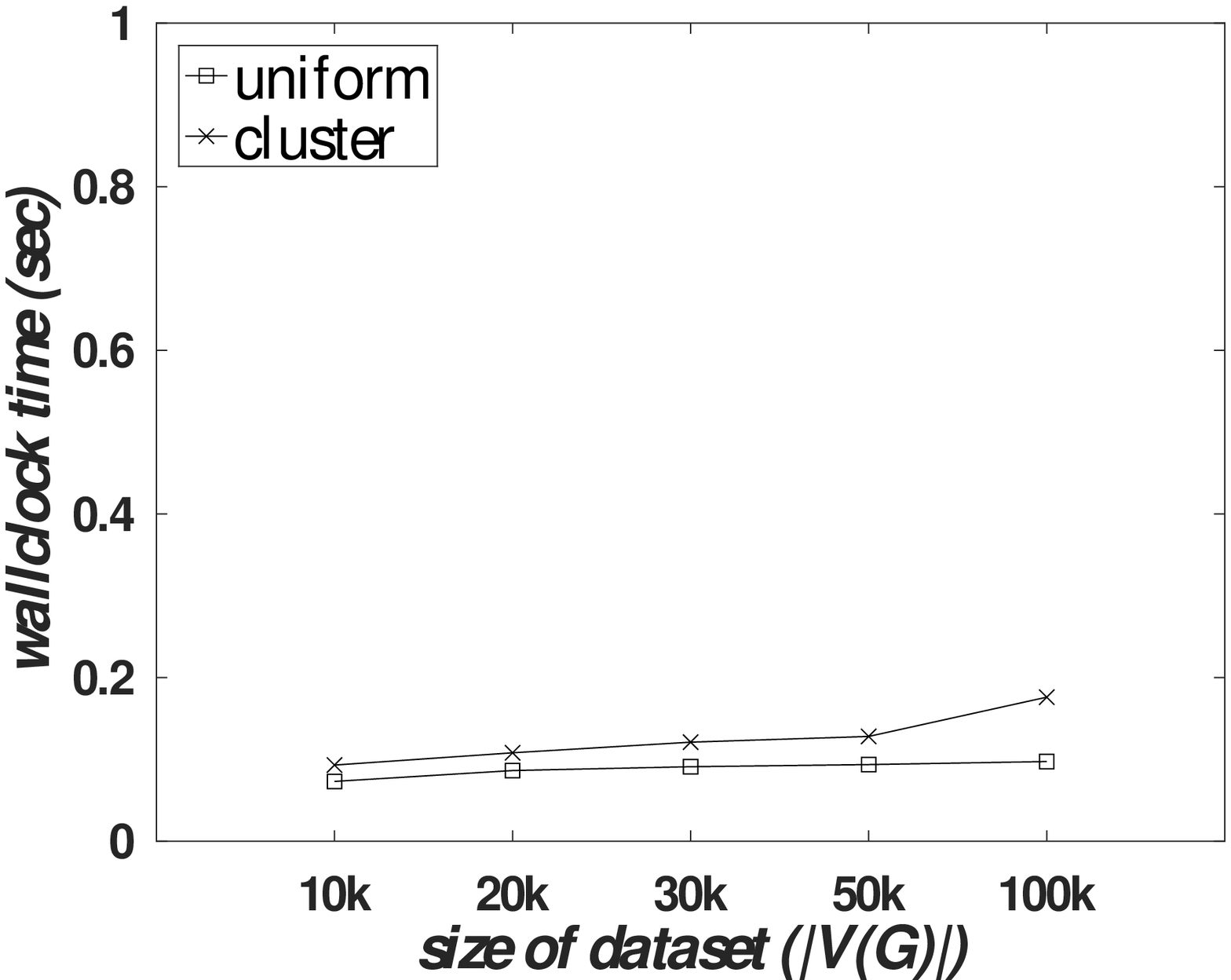}}\label{subfig:dataset_time}
}
\subfigure[][{\small I/O cost}]{
\scalebox{0.23}[0.23]{\includegraphics{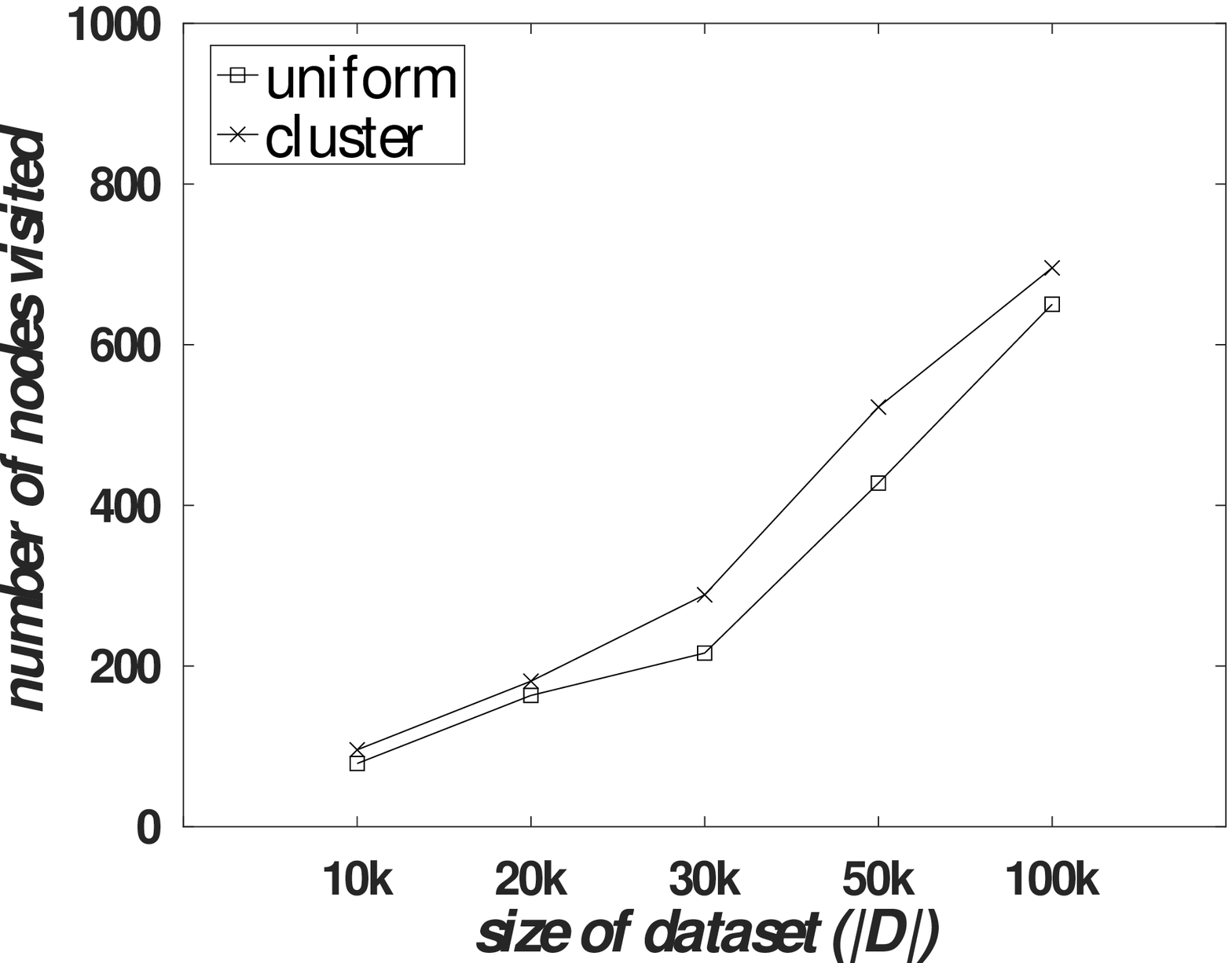}}\label{subfig:dataset_io}
}
\caption{\small The $Top\text{-}kCS^2$ performance vs. graph size $|V(G)|$.}
\label{fig:dataset}
\end{figure}

\subsection{Evaluation of the $CTop\text{-}kCS^2$ Query Efficiency}

\noindent {\bf The $CTop\text{-}kCS^2$ performance vs. real/synthetic data sets.} Figure \ref{fig:cont_baseline} compares our $CTop\text{-}kCS^2$ approach with the $baseline$ algorithm over real/synthetic data sets, in terms of the wall clock time. From the figure, we can see that the efficiency of the $CTop\text{-}kCS^2$ query outperforms that of $baseline$ for all the three data sets. This is because $CTop\text{-}kCS^2$ applies effective pruning methods with the help of the index. The experimental results confirm the effectiveness of our pruning methods, and efficiency of our $CTop\text{-}kCS^2$ approach. 

\noindent \textbf{The $CTop\text{-}kCS^2$ performance vs. parameter $k$.} Figure \ref{fig:cont_k} illustrates the effect of parameter $k$ on the $CTop\text{-}kCS^2$ query performance, where $k = 1, 5, 10, 15, 20$, and other parameters are set to default values. When we increase $k$, the wall clock time smoothly increases, whereas the I/O cost also slightly increases. Nonetheless, both wall clock time and I/O cost remain low (i.e., $0.49 \sim 0.87$ $sec$ and $235 \sim 315$ I/Os, respectively), which indicates the query efficiency of our proposed $CTop\text{-}kCS^2$ approach with different $k$ values.

\noindent \textbf{The $CTop\text{-}kCS^2$ performance vs. average degree $deg$.}
Figure \ref{fig:cont_avg_deg} varies the average degree, $deg$, of each vertex in the graph $G$ from 2 to 4, where other parameters are set to default values. When the $deg$ is small, there are many similar unit patterns (with simple structures), which leads to high processing cost. On the other hand, when the average degree, $deg$, becomes larger, more complex unit patterns (e.g., rectangles and triangles) need to be processed, which requires higher cost. Therefore, for larger average degree, $deg$, as shown in Figure \ref{subfig:cont_time_degree}, the wall clock time first decreases, and then increases. Regarding the I/O cost, since larger $deg$ indicates that we need to access more vertices/edges to search for candidate patterns, the I/O cost increases for larger $deg$ (as shown in Figure \ref{subfig:cont_IO_degree}). Nevertheless, the overall wall clock time and I/O cost remain low (i.e., less than $0.76$ $sec$ and less than 553 I/Os, respectively).

\noindent \textbf{The $CTop\text{-}kCS^2$ performance vs. radius $r$.} Figure \ref{fig:cont_radius} shows the wall clock time and I/O cost of our $CTop\text{-}kCS^2$ approach over $uniform$ and $cluster$ data sets, by varying the radius $r$ from 0.1 to 2, where default values are used for other parameters (as depicted in Table \ref{table:parameter_conti}). The experimental results show that the wall clock time and I/O cost are not very sensitive to radius $r$. Moreover, for different radius $r$, the wall clock time and I/O cost remain low (i.e., $0.6 \sim 0.77$ $sec$ and $250 \sim 313$ I/Os, respectively).

\begin{figure}[t!]
\centering 
\scalebox{0.15}[0.15]{\includegraphics{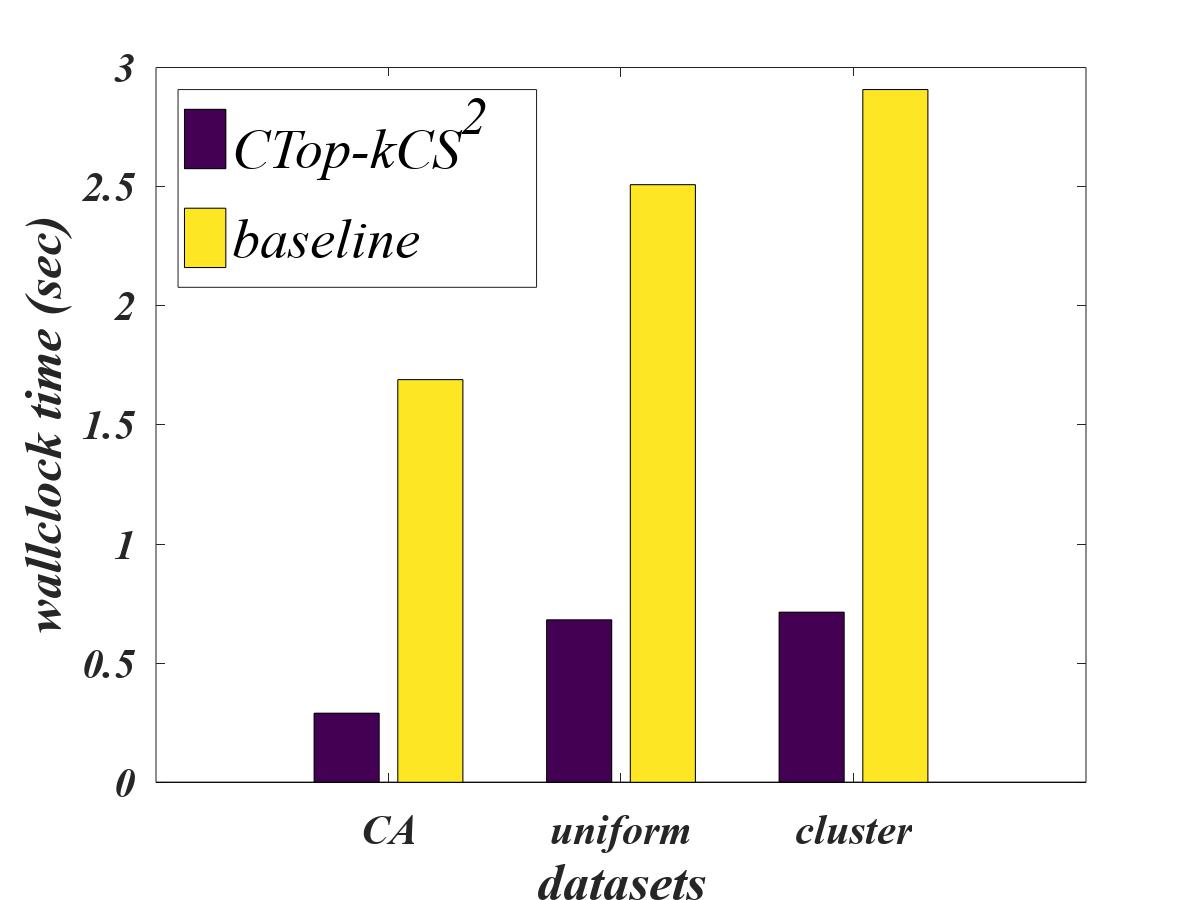}}
\caption{\small The $CTop\text{-}kCS^2$ performance vs. real/synthetic data.}
\label{fig:cont_baseline}
\end{figure}

\noindent \textbf{The $CTop\text{-}kCS^2$ performance vs. similarity threshold, $\theta$.} Figure \ref{fig:cont_theta} evaluates the performance of our $CTop\text{-}kCS^2$ algorithm for different similarity thresholds, $\theta$, where all other parameters are set to their default values. With the increase of threshold $\theta$, both wall clock time and I/O cost smoothly increase. This is because, a larger $\theta$ value will lead to a larger distance threshold (i.e., $k$-th largest distance from candidate community to $S_i$ in $Q_i$), which results in more candidates with higher processing cost. Nonetheless, for different $\theta$ values, both wall clock time and I/O cost remain low (i.e., $0.58\sim 0.83$ $sec$ and $220 \sim 361$ I/Os, respectively).

\noindent \textbf{The $CTop\text{-}kCS^2$ performance vs. graph size $|V(G)|$.} Figure \ref{fig:cont_dataset} evaluates the scalability of our $CTop\text{-}kCS^2$ algorithm by varying the number, $|V(G)|$, of the vertices from $10K$ to $100K$, where other parameters are set to their default values. Note that, the default value for the average degree of each vertex is 3, thus, there are about $300K$ edges for $100K$ vertices. Also, each edge has multiple POIs, making the number of POIs up to millions. From figures, when the graph size, $|V(G)|$, becomes larger, both the wall clock time and I/O cost of our $CTop\text{-}kCS^2$ approach slightly increase. This is reasonable, since larger data sets lead to more candidate unit patterns (and candidate communities) to process and refine, which requires higher CPU time and node accesses in the aR-tree. Nonetheless, the wall clock time and I/O cost remain low (i.e., $0.35 \sim 1.4$ $sec$, and $83\sim 755$ I/Os, respectively), which shows good scalability of our $CTop\text{-}kCS^2$ approach for different graph sizes $|V(G)|$.

\begin{figure}[t!]
\centering 
\subfigure[][{\small wall clock time}]{
\scalebox{0.23}[0.23]{\includegraphics{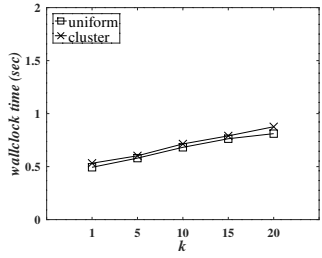}}
}
\subfigure[][{\small I/O cost}]{
\scalebox{0.23}[0.23]{\includegraphics{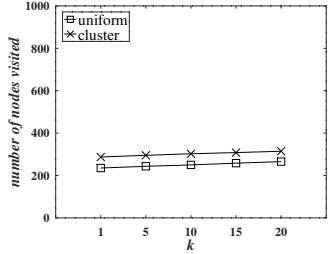}}
}
\caption{\small The $CTop\text{-}kCS^2$ performance vs. parameter $k$.}
\label{fig:cont_k}
\end{figure}

\begin{figure}[t!]
\centering 
\subfigure[][{\small wall clock time}]{
\scalebox{0.23}[0.23]{\includegraphics{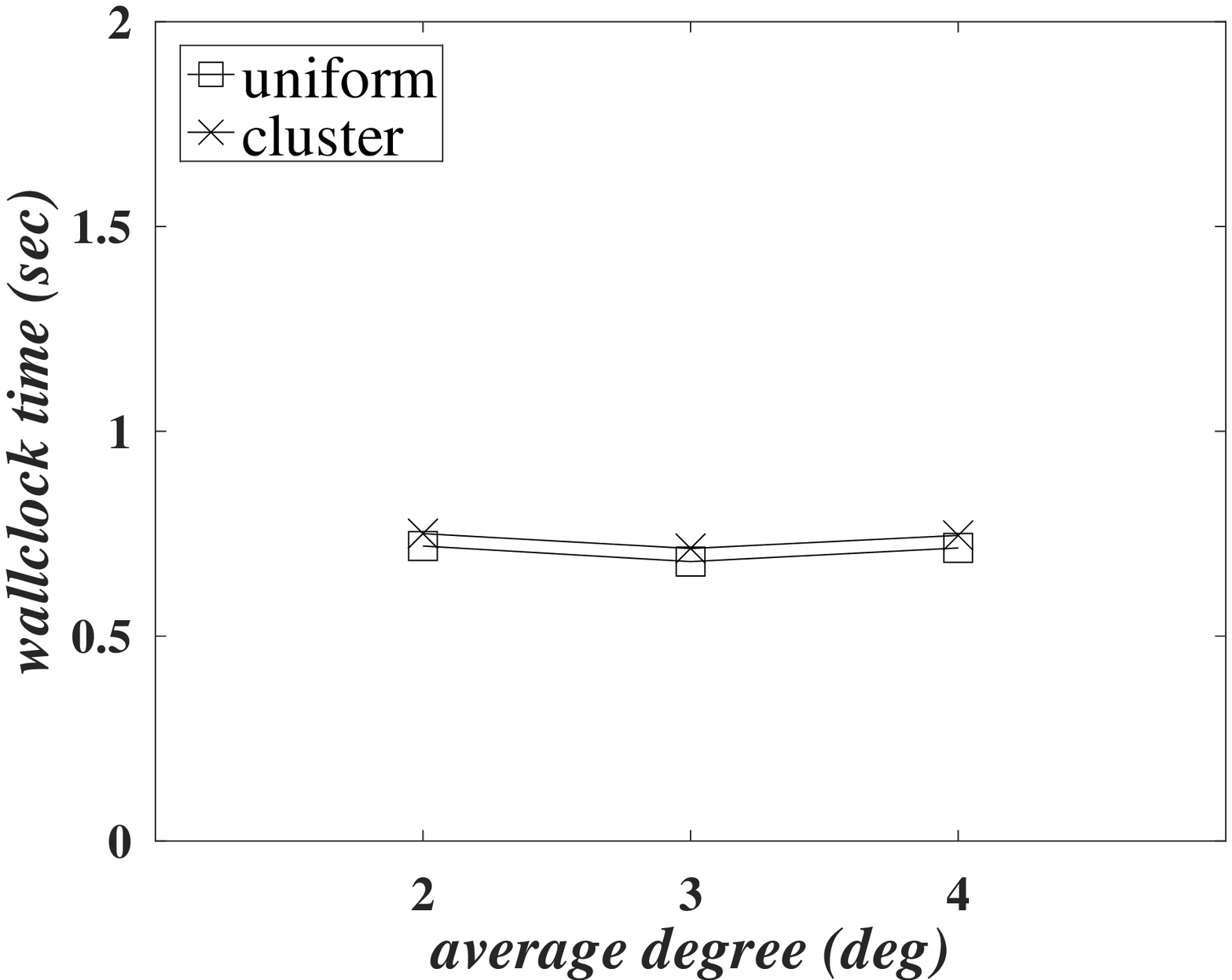}}\label{subfig:cont_time_degree}
}
\subfigure[][{\small I/O cost}]{
\scalebox{0.23}[0.23]{\includegraphics{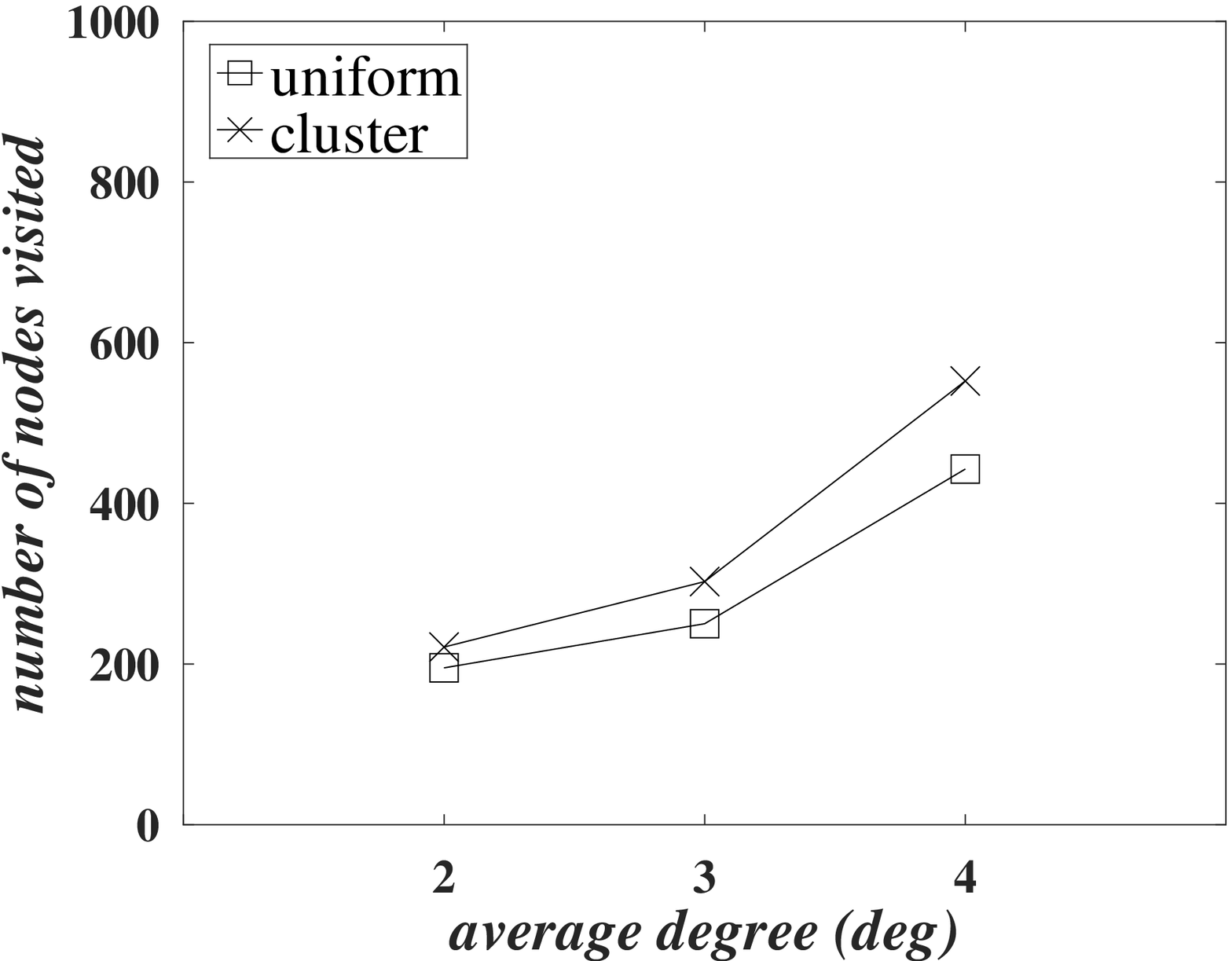}}\label{subfig:cont_IO_degree}
}
\caption{\small The $CTop\text{-}kCS^2$ performance vs. average degree $deg$.}\label{fig:cont_avg_deg}
\end{figure}

\begin{figure}[t!]
\centering 
\subfigure[][{\small wall clock time}]{
\scalebox{0.23}[0.23]{\includegraphics{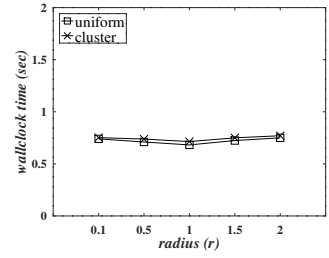}}
}
\subfigure[][{\small I/O cost}]{
\scalebox{0.23}[0.23]{\includegraphics{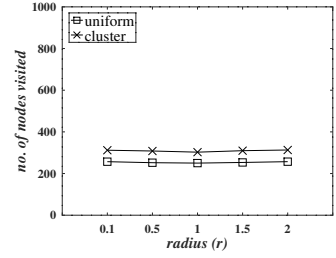}}
}
\caption{\small The $CTop\text{-}kCS^2$ performance vs. radius $r$.}
\label{fig:cont_radius}
\end{figure}

\noindent \textbf{The $CTop\text{-}kCS^2$ performance vs. query length $|L|$.} Figure \ref{fig:cont_query_length} evaluates the scalability of our $CTop\text{-}kCS^2$ algorithm by varying the size of the query line segment length, $|L|$, from $2$ to $6$, where other parameters are set to their default values. From figures, when the query length, $|L|$, increase, both the wall clock time and I/O cost of our $CTop\text{-}kCS^2$ approach slightly increase. This is reasonable, since larger query length lead to more query communities centered at $L$ which leads to more candidate unit patterns (and candidate communities) to process and refine, which requires higher CPU time and node accesses in the aR-tree. Nonetheless, the wall clock time and I/O cost remain low (i.e., $0.54 \sim 0.86$ $sec$, and $234 \sim 346$ I/Os, respectively).

\begin{figure}[t!]
\centering 
\subfigure[][{\small wall clock time}]{
\scalebox{0.23}[0.23]{\includegraphics{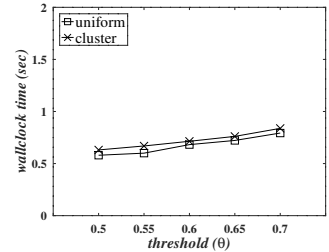}}
}
\subfigure[][{\small I/O cost}]{
\scalebox{0.23}[0.23]{\includegraphics{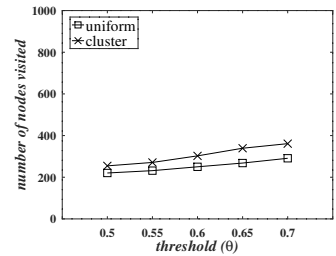}}
}
\caption{\small The $CTop\text{-}kCS^2$ performance vs. similarity threshold $\theta$.}
\label{fig:cont_theta}
\end{figure}

\begin{figure}[t!]
\centering 
\subfigure[][{\small wall clock time}]{
\scalebox{0.23}[0.23]{\includegraphics{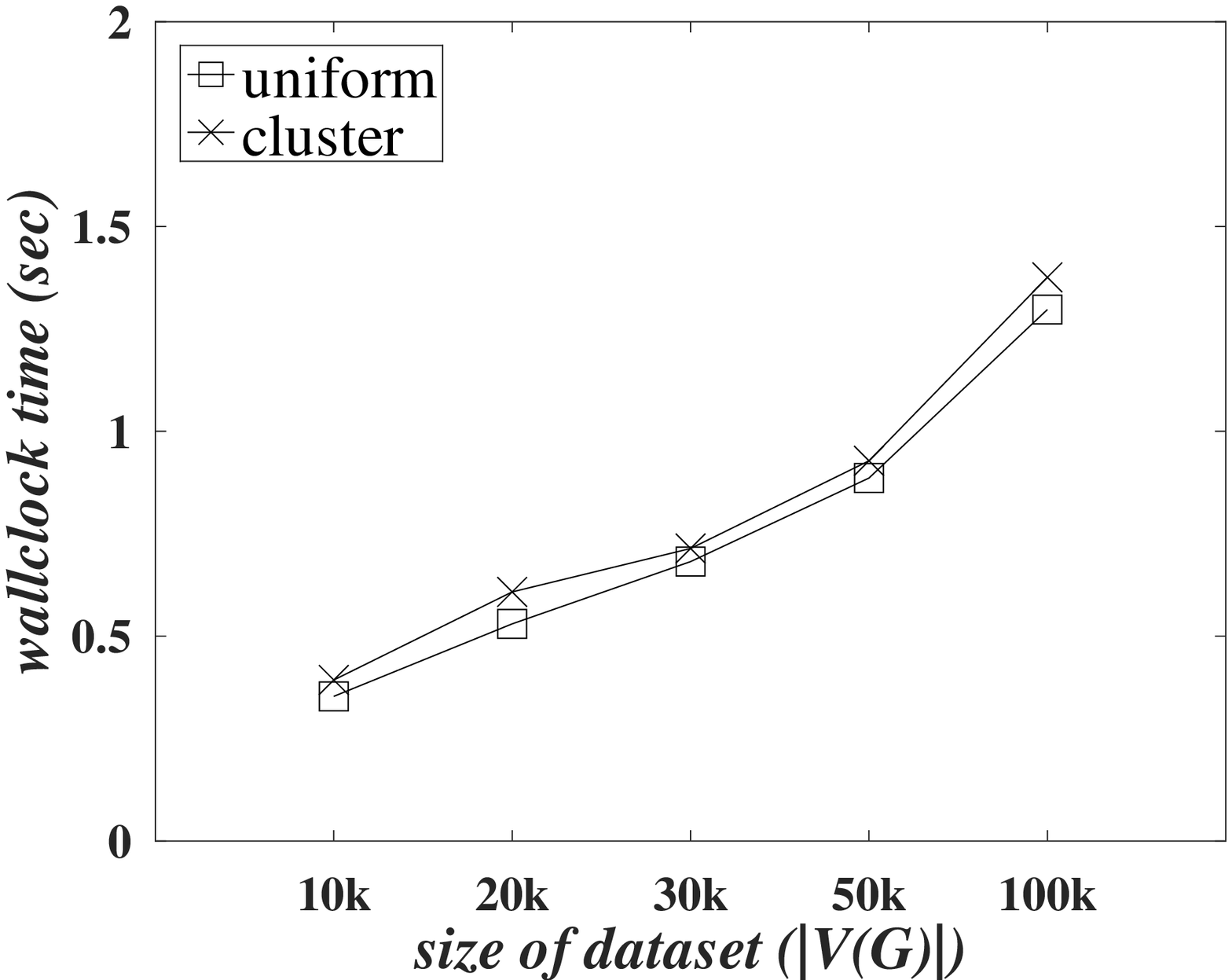}}\label{subfig:cont_dataset_time}
}
\subfigure[][{\small I/O cost}]{
\scalebox{0.23}[0.23]{\includegraphics{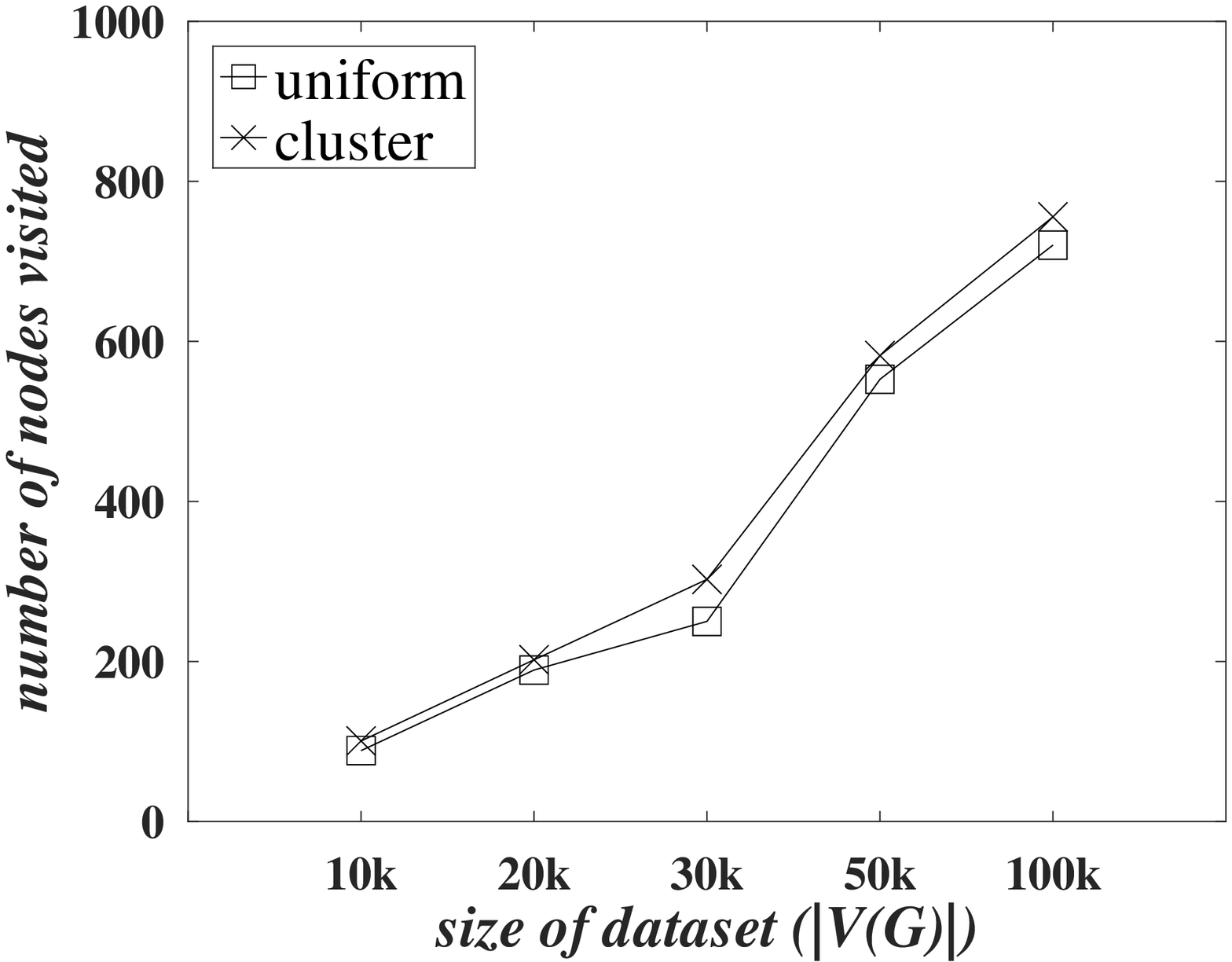}}\label{subfig:cont_dataset_cost}
}
\caption{\small The $CTop\text{-}kCS^2$ performance vs. dataset $|V(G)|$.}
\label{fig:cont_dataset}
\end{figure}

\begin{figure}[t!]
\centering 
\subfigure[][{\small wall clock time}]{
\scalebox{0.23}[0.23]{\includegraphics{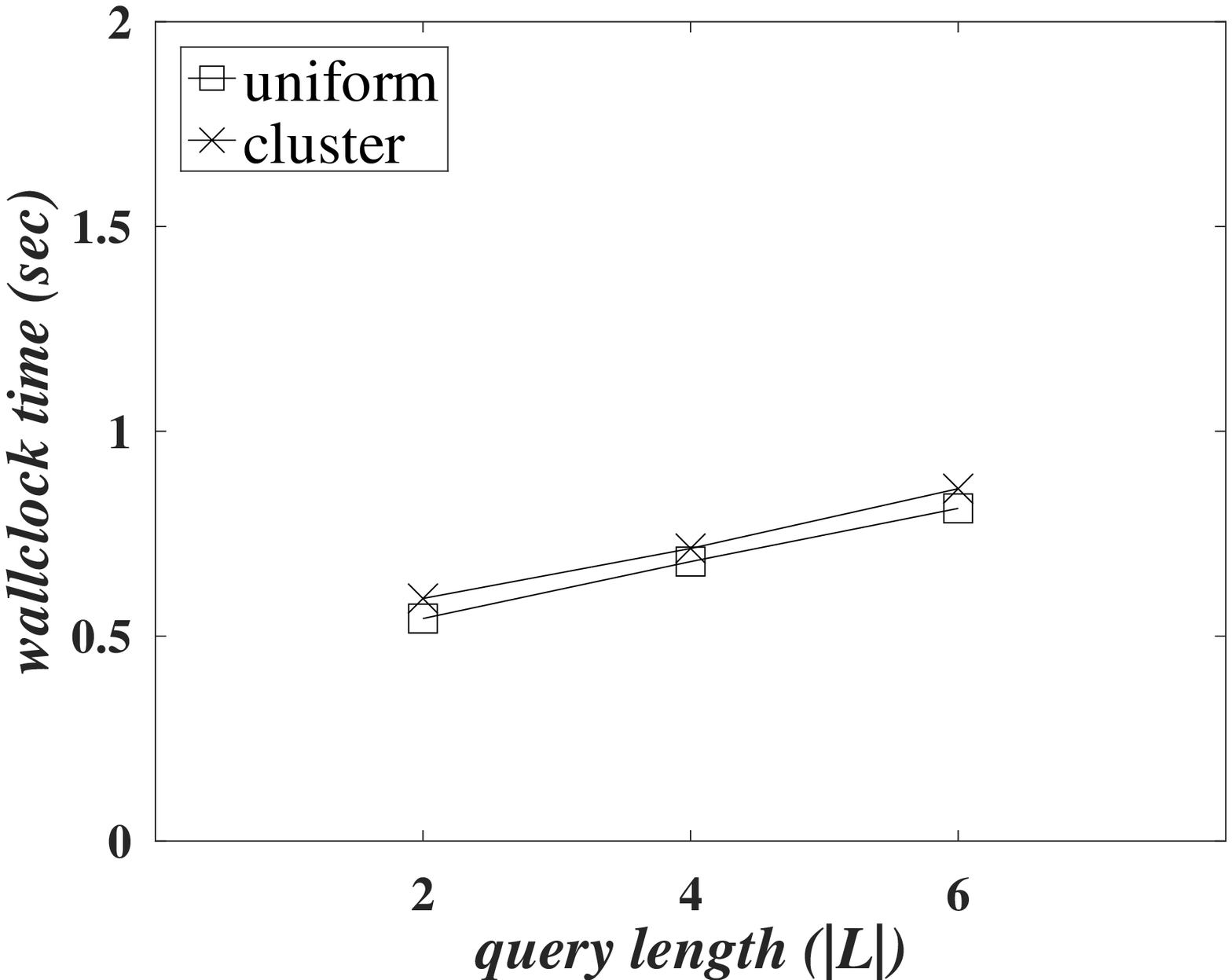}}\label{subfig:cont_L_time}
}
\subfigure[][{\small I/O cost}]{
\scalebox{0.23}[0.23]{\includegraphics{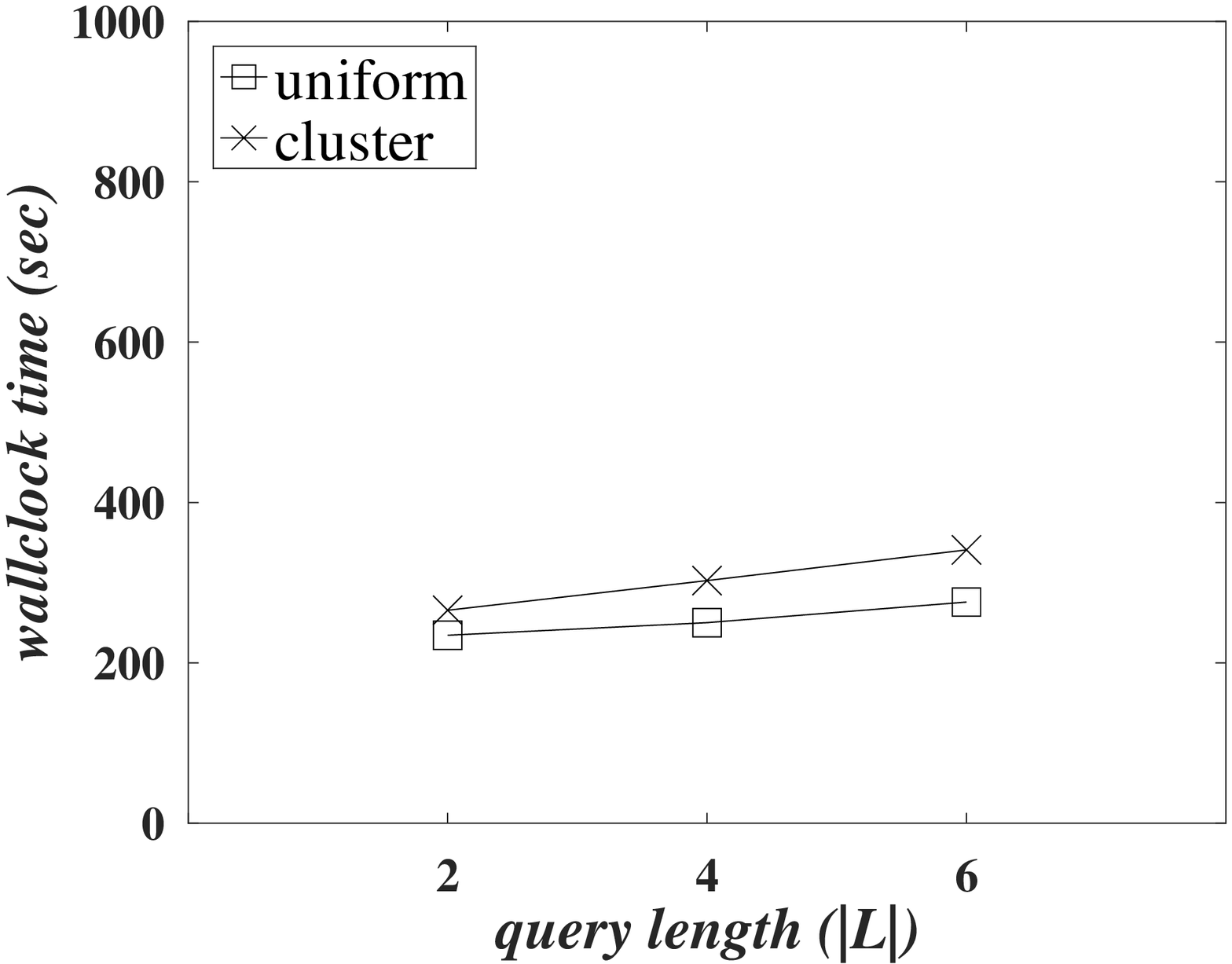}}\label{subfig:cont_L_io}
}
\caption{\small The $CTop\text{-}kCS^2$ performance vs. query length $|L|$. }
\label{fig:cont_query_length}
\end{figure}

\section{Related Work}
\label{sec:related_work}

In this section, we review related works on community detection and search in social networks.

\noindent \textbf{Community Detection.} In the last decade, discovering/detecting communities in social networks has been extensively studied in many real-world applications. Some prior works \cite{Fortunato10, Newman04} use link-based analysis to detect communities. Che et al. \cite{CheYW20memetic} discovered the community in signed networks. Conte et al. \cite{conte2020truly} used the $k$-truss and max-truss to detect communities in graphs with billions of edges. Wu et al. \cite{wu2020deep} applied the deep learning to detect high-quality communities in social networks. 

In contrast, our $Top\text{-}kCS^2$ problem is to retrieve top-$k$ communities (instead of all communities) from road networks (instead of social networks) that are similar and closest to query community $Q$. Thus, with different query semantics, we cannot borrow previous techniques for community detection to solve our problem.

\noindent \textbf{Community Search.} The community search problem aims to obtain communities that contain a given query vertex $q$. There are many existing works \cite{Girvan2002community, Expert2011uncovering, Clauset2004finding, guo2008region}, which proposed effective and efficient algorithms to retrieve such communities. Fang et al. \cite{fang2017effective} considered $k$-core semantics, which obtain communities (containing query vertex $q$) with the degree of each vertex greater than or equal to $k$. Conte et al. \cite{conte2020truly} studied the $k$-truss (i.e., a subgraph where each edge belongs to at least $(k-2)$ triangles) and $max$-truss (i.e., a $k$-truss community in graph $G$ with the maximum $k$ value). Sun et al. \cite{sun2020community} searched for communities with high Steiner connectivity. Li et al. \cite{li2015influential} defined $k$-influential community as a connected, cohesive subgraph with maximal structure.

There are some other works \cite{cui2013online, cui2014local, wu2015robust, huang2015approximate, li2015influential, sozio2010community, bronchi2015efficient, Clauset2005finding, Bagrow2008evaluating, vishwanath2010analysis} on searching communities (local neighborhood) over graphs. Specifically, Clauset \cite{Clauset2005finding} used ``local modularity'' as community goodness measure, which is the relative density within the community to outside the community. Bagrow et al. \cite{Bagrow2008evaluating} selected the largest ``outwardness'' of a vertex (i.e., the number of neighbors outside the community minus the number inside) to improve local community search. Viswanath et al. \cite{vishwanath2010analysis} observed that local community around the trusted node is also trustworthy. In \cite{yang2012socio, Li12, Yuan16, fang2017effective, GPSSN, chen2018maximum}, authors tried to find similar communities over geo-social networks with high structural and/or spatial cohesiveness. 

Different from these works on (geo-)social networks or attributed graphs, in this paper, we consider spatial communities (instead of user communities) on (planar) road networks with high similarities of graph structures (i.e., road-network patterns) and POIs (as well as distances to a query community $Q$). To our best knowledge, prior works did not study the community similarity search problem on road-network graphs, which takes into account factors such as graph patterns, POI similarity, and distances to a query community. Therefore, we cannot directly apply previous approaches for the community search to tackle our $Top\text{-}kCS^2$ problem.

Compared to our previous short conference paper \cite{rai2021top}, in this long version, for $Top\text{-}kCS^2$, we designed and provided more technical details in the offline pre-processing phase (Section \ref{sec:offline_processing}), pruning heuristics (Section \ref{sec:prune}), and candidate unit pattern retrieval (Section \ref{sec:candidate_unit}). We also reported the experimental results of our $Top\text{-}kCS^2$ algorithm for different parameter settings (Section \ref{sec:experiment}). Furthermore, we also formulate and tackle a new variant, that is, continuous $Top\text{-}kCS^2$ problem (denoted as $CTop\text{-}kCS^2$), where the query community moves along a query line segment (e.g., surrounding communities from home to the working place). We design novel techniques for $CTop\text{-}kCS^2$ to split the query line segment into multiple intervals (each with the same query community), and develop efficient algorithms to monitor and maintain top-$k$ communities that are similar to the query community for each interval.

\section{Conclusions}
\label{sec:conclusion}

In this paper, we formulate and tackle a novel problem of \textit{top-$k$ 
community similarity search} ($Top\text{-}kCS^2$) over large-scale road-network graphs, 
which retrieves $k$ spatial communities having high structural and POI similarities and with spatial closeness, 
with respect to a given query community. To tackle this problem, we propose effective pruning strategies 
and indexing mechanism, and develop an efficient $Top\text{-}kCS^2$ query processing algorithm.
We also consider and tackle the $CTop\text{-}kCS^2$ problem, where query communities continuously move along a line segment. We have demonstrated through extensive experiments the performance of our proposed 
$Top\text{-}kCS^2$ and $CTop\text{-}kCS^2$ approaches over real and synthetic road networks.

\balance



%

\bibliographystyle{IEEEtran}
\let\xxx=\bibitem\def\bibitem{\par \xxx}
\bibliography{sample-bibliography}

%

\begin{IEEEbiography}[{\includegraphics[width=1in,height=1.25in,clip,keepaspectratio]{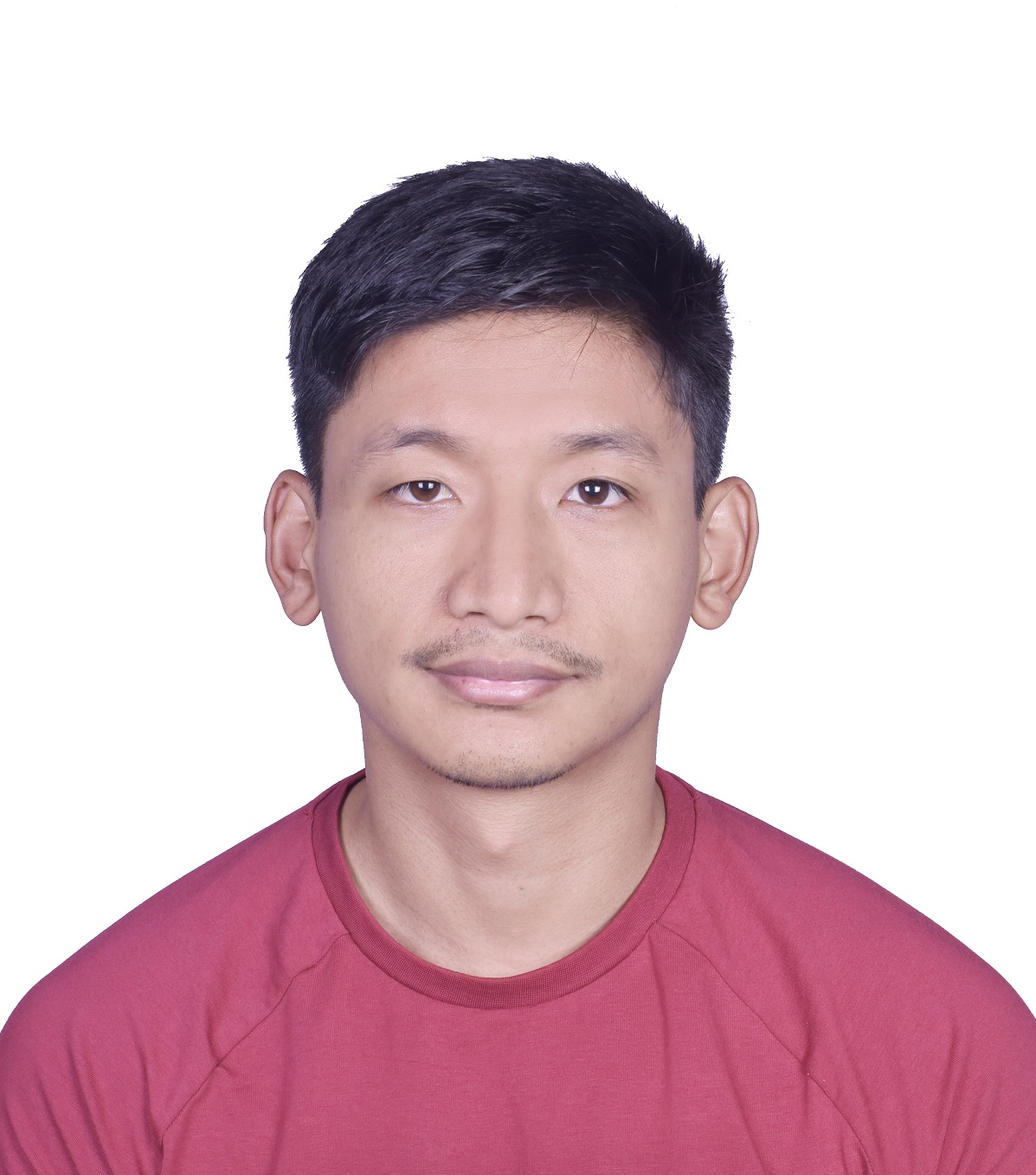}}]{Niranjan Rai} received the BE degree in computer engineering from Institute of Engineering, Tribhuvan University, Nepal, and is currently pursuing the PhD degree in computer science at Kent State University, USA. 
\end{IEEEbiography}

\begin{IEEEbiography}[{\includegraphics[width=1in,height=1.25in,clip,keepaspectratio]{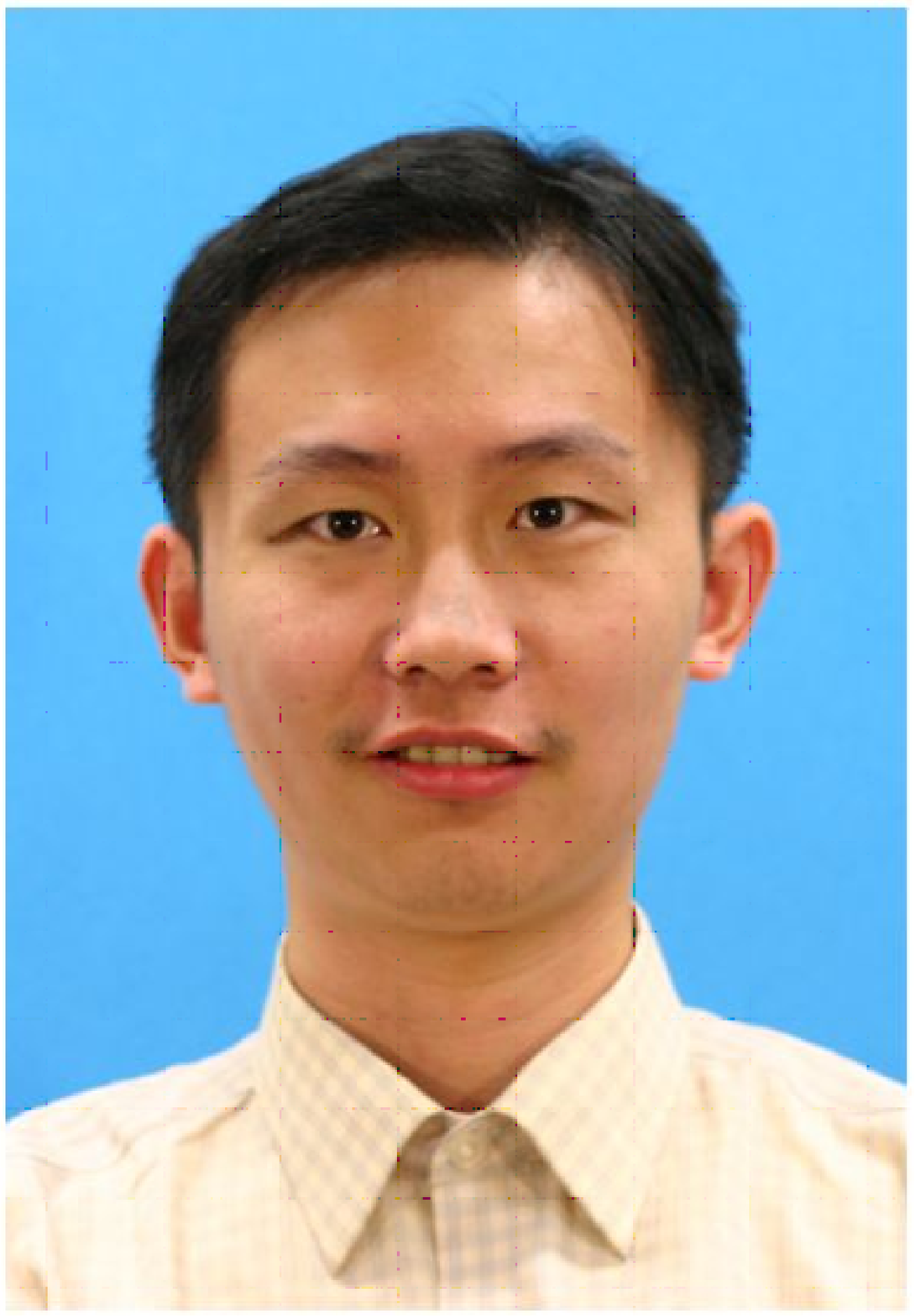}}]{Xiang Lian}
received the BS degree from the Department
of Computer Science and Technology,
Nanjing University, China, and the PhD degree in computer
science from the Hong Kong University of
Science and Technology, Hong Kong. He is now an associate
professor in the Computer Science Department
at Kent State University, USA. His research interests
include uncertain/certain graph databases, spatio-temporal databases, probabilistic databases, 
and so on.
\end{IEEEbiography}

\newpage

\appendices
\section{}
\noindent {\small{\bf Proof of Theorem \ref{theorem:sim_ub}}}

\begin{proof}
The score upper bound, $ub\_sim(C_l, Q)$, is the highest score possible for community, $C_l$ and $Q$. From Eqs.~(\ref{eq:community_score}) and (\ref{eq:ub_score}), we know that the difference between the similarity score and upper bound similarity score is the $POI$ vector used to calculate the score. Since, cosine similarity is the product of each component of two vectors, as shown in Eq.~(\ref{eq:cosine_sim_norm}), the larger the vector items, larger will be the score. We know from Figure \ref{fig:simscore}, max vector has larger vector components than the normal $POI$ vector in that unit pattern. The similarity score of a unit pattern increases when $max$ $POI$ vector, $c_h.max$,  is used to calculate the score. Ultimately, higher unit pattern scores leads to higher community similarity score. Therefore, $ub\_sim(C_l, Q)$ is always greater than $sim(C_l, Q)$. Hence, Theorem \ref{theorem:sim_ub} holds. 
\end{proof}

\section{}
\noindent {\small{\bf Proof of Theorem \ref{theorem:distancepruning}}}

\begin{proof}
From the assumption of the theorem, $C_k$ has the $k$-th largest distance, $dist(v_q, C_k)$, to the query community $Q$, and $dist(v_q, C_l) \geq dist(v_q, C_k)$ holds. Therefore, the community $C_l$ has the distance $dist(v_q, C_l)$ to $Q$ greater than or equal to at least $k$ candidate communities satisfying the similarity constraint $sim(C_i, Q) > \theta$  (i.e., $C_1$, $C_2$, ..., and $C_k$). This indicates that $C_l$ cannot be the $Top\text{-}kCS^2$ result, and thus can be safely pruned. Hence, Theorem \ref{theorem:distancepruning} holds. 
\end{proof}

\section{}
\noindent {\small{\bf Proof of Theorem \ref{theorem:unit_pattern_retrieval}}}
\begin{proof}
If a spatial community, $C_l$ is a candidate community, then the similarity score, $sim(C_l, Q)$ (given by Eq.~(\ref{eq:community_score})), between communities $C_l$ and $Q$ should be greater than or equal to $\theta$. According to the pigeonhole principle, for $n$ types of unit patterns, at least one unit pattern type in Eq.~(\ref{eq:community_score}) should have the similarity score greater than or equal to $\frac{\theta}{n}$ similarity score. In other words, we have: $\frac{\sum_{i=1}^{|c_h|}\sum_{j=1}^{|q_h|} cos\_sim(c_h[i], q_h[j])}{|q_h| \cdot n} \geq \frac{\theta}{n}$, which can be simplified to $\sum_{i=1}^{|c_h|}\sum_{j=1}^{|q_h|} cos\_sim(c_h[i], q_h[j]) \geq \theta\cdot |q_h|$. 

For $|c_h|$ unit patterns of type $h$, we can apply the pigeonhole principle again, and at least one unit pattern $c_h[i]$ has the summed similarity scores w.r.t. $q_h$ greater than or equal to $\frac{\theta\cdot |q_h|}{|c_h|}$. Thus, we obtain: $\sum_{j=1}^{|q_h|} cos\_sim(c_h[i], q_h[j]) \geq \frac{\theta\cdot |q_h|}{|c_h|}$. 

Since there are multiple possible communities in graph $G$ that contain $c_h[i]$, we can relax the threshold on the RHS of the inequality above to $\frac{\theta\cdot |q_h|}{\max\{|c_h|\}}$, where $\max\{|c_h|\}$ is an upper bound of the number of communities containing $c_h[i]$ (obtained via offline pre-computation). Therefore, if a unit pattern $c_h[i]$ satisfies $\sum_{j=1}^{|q_h|} cos\_sim(c_h[i], q_h[j]) \geq \frac{\theta\cdot |q_h|}{\max\{|c_h|\}}$ (which is exactly Eq.~(\ref{eq:eq6})), then $c_h[i]$ is a candidate unit pattern. Hence, the theorem holds.
\end{proof}

\ifCLASSOPTIONcaptionsoff
  \newpage
\fi





\end{document}